%% file: ijcai26-new.tex
\documentclass{article}
\usepackage{ijcai26}

\usepackage{times}
\usepackage{soul}
\usepackage{url}
\usepackage[hidelinks]{hyperref}
\usepackage[utf8]{inputenc}
\usepackage[small]{caption}
\usepackage{graphicx}
\usepackage{amsmath}
\usepackage{amsthm}
\usepackage{amsfonts}
\usepackage{amssymb}
\usepackage{booktabs}
\usepackage{algorithm}
\usepackage{algorithmic}
\usepackage[switch]{lineno}
\usepackage{mathtools}
\usepackage{subcaption}
\usepackage{natbib}

\theoremstyle{definition}
\newtheorem{dfn}{Definition}[section]
\newtheorem{ass}[dfn]{Assumption}
\newtheorem{prop}[dfn]{Proposition}

\newtheorem{thm}[dfn]{Theorem}

\title{Off-Policy Evaluation and Learning for Survival Outcomes under Censoring}

\author{
Kohsuke Kubota$^1$
\and
Mitsuhiro Takahashi$^1$\and
Yuta Saito$^{2}$\\
\affiliations
$^1$NTT DOCOMO, INC.\\
$^2$Hanjuku-kaso, Co, Ltd.\\
\emails
\{kousuke.kubota.xt, mitsuhiro.takahashi.zp\}@nttdocomo.com,
saito@hanjuku-kaso.com
}

\begin{document}
\maketitle

\begin{abstract}
Optimizing survival outcomes, such as patient survival or customer retention, is a critical objective in data-driven decision-making. Off-Policy Evaluation~(OPE) provides a powerful framework for assessing such decision-making policies using logged data alone, without the need for costly or risky online experiments in high-stakes applications.
However, typical estimators are not designed to handle right-censored survival outcomes, as they ignore unobserved survival times beyond the censoring time, leading to systematic underestimation of the true policy performance.
To address this issue, we propose a novel framework for OPE and Off-Policy Learning~(OPL) tailored for survival outcomes under censoring.
Specifically, we introduce IPCW-IPS and IPCW-DR, which employ the Inverse Probability of Censoring Weighting technique to explicitly deal with censoring bias.
We theoretically establish that our estimators are unbiased and that IPCW-DR achieves double robustness, ensuring consistency if either the propensity score or the outcome model is correct.
Furthermore, we extend this framework to constrained OPL to optimize policy value under budget constraints.
We demonstrate the effectiveness of our proposed methods through simulation studies and illustrate their practical impacts using public real-world data for both evaluation and learning tasks.
\end{abstract}

\section{Introduction}
Evaluating and implementing policies~(treatment strategies) aimed at maximizing survival outcomes presents a critical challenge in decision-making in diverse real-world domains, ranging from extending survival in healthcare~\citep[e.g.][]{geng2015optimal, murphy2003optimal} to reducing customer churn in marketing~\citep[e.g.][]{ascarza2018pursuit}.
The core focus of these applications is not limited to evaluating short-term effects, but encompasses a broader understanding of a policy's impact on the long-term survival trajectory.

Off-policy evaluation~(OPE) provides a powerful framework to estimate the performance of new decision-making policies using only offline logged data, which is crucial in domains where online experiments are costly or impractical~\citep{wang2017optimal}.
Conventional approaches, such as Inverse Propensity Scoring~(IPS)~\citep{horvitz1952generalization,rosenbaum1983central} and Doubly Robust~(DR)~\citep{dudik2011doubly, robins1995semiparametric}, have been widely adopted for evaluating policies using offline logged data collected under different (logging) policies.

However, despite their broad applicability, \textbf{these typical estimators are not designed to handle right-censored survival outcomes}.
This is because these estimators are designed primarily for immediate rewards or outcomes defined over a fixed time horizon.
They also implicitly assume that all outcomes are fully observed, which is directly violated by right-censoring, a defining characteristic of survival data.
A naive application of these estimators cannot distinguish censoring from true event occurrence, and therefore wrongly treats observed times as final event times.
This induces systematic underestimation bias because typical methods fail to account for potential survival beyond the censoring point.

To address this issue, we propose a novel framework for off-policy evaluation and learning under practical constraints designed for censored survival outcomes.
Specifically, we develop new statistical estimators, namely \textbf{Inverse Probability of Censoring Weighted-Inverse Propensity Scoring (IPCW-IPS)} and \textbf{IPCW-Doubly Robust (IPCW-DR)}, which explicitly consider censoring induced bias.
IPCW-IPS additionally re-weights logged data via estimated censoring probabilities beyond the typical importance weighting regarding policy probabilities. 
IPCW-DR further incorporates outcome modeling, which enjoys the doubly robust property with respect to the propensity score and the outcome model. 
Consequently, it ensures reliability in practical applications and yields lower variance compared to IPCW-IPS.
Furthermore, we extend our framework to the problem of constrained Off-Policy Learning~(OPL), enabling policy optimization for survival outcomes subject to practical budget constraints.

Finally, we empirically demonstrate the effectiveness of our proposed framework. The simulation results demonstrate that our proposed estimators, particularly IPCW-DR, achieve superior accuracy than standard OPE estimators. Moreover, the application to the Starbucks Customer Reward Program dataset highlights the practical effectiveness of our approach in complex scenarios. Specifically, we demonstrate how our framework can better optimize promotional offers to minimize the customer purchase cycle under budget constraints.

Our key contributions can be summarized as follows.
\begin{itemize}
    \item We develop censoring aware estimators for off-policy evaluation of survival outcomes leveraging the technique of inverse probability of censoring weighting and provide their theoretical advantages.
    \item We introduce an iterative algorithm for off-policy learning with practical constraints that targets optimization of survival outcomes.
    \item We demonstrate the effectiveness of the proposed methods in terms of both evaluation and learning tasks, through simulation studies and a real world application.
\end{itemize}

Due to space limitations, we provide a comprehensive review of related work in Appendix~\ref{app:related_work}.

\section{Preliminaries}

\subsection{Problem Formulation}
Here we consider the OPE problem for an evaluation policy $\pi_e$ under right-censoring, using data collected by a logging policy $\pi_0$.
Let $x \in \mathcal{X}$ be the context vector representing individual characteristics, $a \in \mathcal{A}$ be the discrete action selected according to a policy $\pi(a \mid x)$. 
The logged dataset $\mathcal{D} = \{ ( x_i, a_i, T_i, r_i ) \}_{i=1}^{n}$ is collected under a logging policy $\pi_0$, where the underlying random variables are generated as
\begin{equation}
    (x_i, a_i, L_i, C_i) \sim p(x) \pi_0(a_i \mid x_i) p(L_i, C_i \mid x_i, a_i).
\end{equation}
Here, $L_i$ and $C_i$ denote the true (latent) survival time and the censoring time, respectively.
The observed data consist only of the observed time $T_i = \min\{L_i, C_i\}$ and an event indicator $r_i = \mathbb{I} \{ L_i \leq C_i \}$.
Thus, for censored individuals ($r_i=0$), we only know that their true survival time exceeds the observed time $T_i$.
Our goal is to accurately estimate the performance (policy value) of an evaluation policy $\pi_e$ regarding the true survival time using only the logged data $\mathcal{D}$.

\paragraph{Policy Value Definition.}
A key quantity of interest in our work is the expected survival probability at an arbitrary time $t$.
Let $S(x, a, t) \coloneq P(L > t \mid x, a)$ be the survival function given context $x$ and action $a$.
The expected survival probability under policy $\pi_e$ at time $t$, as a type of policy value, is defined as
\begin{equation}
V(\pi_e, t) \coloneq  \mathbb{E}_{p(x) \pi_e(a \mid x)} [S(x, a, t)].
\end{equation}

Another important quantity of interest is the Restricted Mean Survival Time~(RMST)~\citep{uno2014moving}, which is the expected survival time up to a pre-specified finite time horizon $\tau$. This version of the policy value is defined as follows.
\begin{equation}
\label{eq:def-rmst}
V^{\tau}(\pi_e) \coloneq \int_{0}^{\tau} V(\pi_e, t) dt,
\end{equation}

\paragraph{Assumptions.}
We begin by outlining some regularity assumptions necessary for analyzing the theoretical properties of OPE estimators under censoring.
First, the following common support assumption, standard in the OPE literature~\citep{sachdeva2020off}, ensures that the evaluation policy does not place a positive probability on actions that the logging policy never takes.
It is formally stated as follows.
\begin{ass}[Common Support]
\label{ass:common_support}
If $\pi_e (a \mid x) > 0$, then $\pi_0(a \mid x) > 0$ for all $x \in \mathcal{X}$ and $a \in \mathcal{A}$.
\end{ass}
We also introduce an assumption regarding censoring.
\begin{ass}[Conditional Independent Censoring]
\label{ass:independent_censoring}
Given the context $x$ and action $a$, the true survival time $L$ and the censoring time $C$ are conditionally independent
\begin{equation*}
L \perp C \mid (x, a).
\end{equation*}
\end{ass}
This assumption serves as the theoretical basis for conventional methods in survival analysis, such as the Kaplan-Meier estimator and the Cox proportional hazards model~\citep{fleming2013counting}.

\begin{figure}
\centering
\includegraphics[width=\columnwidth]{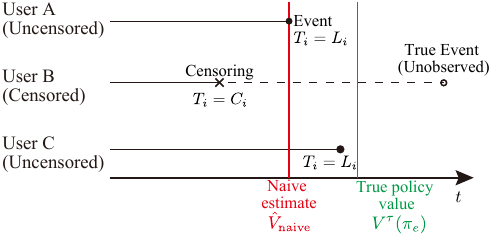}
\caption{
Illustration of the underestimation bias inherent in naive OPE. 
While the true event time for User B is unobserved due to censoring, naive estimators treat the censoring time as the event time, leading to a systematic underestimation of the survival time.}
\label{fig:teaser}
\end{figure}

\subsection{Naive OPE under Censoring}
A naive approach to addressing this estimation problem is to directly apply conventional estimators in OPE, such as IPS or DR, to the logged dataset ignoring the censoring issue~\citep{saito2021counterfactual}.
These standard estimators are typically designed for fully observed outcomes and, when naively applied to survival data, they treat the observed time $T_i$ as the true event time regardless of the censoring indicator $r_i$.
Specifically, by treating the observed indicator $\mathbb{I} \{ T_i > t \}$ as the reward, the naive IPS estimator is given by
 $$\hat{V}_{\text{IPS}} (\pi_e, t; \mathcal{D}) \coloneq \frac{1}{n} \sum_{i=1}^{n} \frac{\pi_e(a_i \mid x_i)}{\pi_0(a_i \mid x_i)} \mathbb{I} \{ T_i > t\}.$$
We can also define the naive DR estimator by combining IPS with an outcome model $\hat{S}(x, a, t)$ trained on the observed times.\footnote{We formally define naive DR in Eq.~(\ref{eq:DR}) in Appendix~\ref{app:proof}.}
Since these estimators are not designed to account for the censoring mechanism, they suffer from systematic underestimation bias, explicitly characterized as follows.

\begin{prop}
\label{prop:bias_naive_estimators}
Under the presence of censoring~(that is, $P(L > C) > 0$) and Assumptions~\ref{ass:common_support} and \ref{ass:independent_censoring}, the naive IPS and DR estimators are biased as
\begin{align*}
&\text{Bias} [\hat{V}_{\text{IPS}}] = \text{Bias} [\hat{V}_{\text{DR}}] \\ 
&\coloneq \mathbb{E}_{p(x) \pi_e(a \mid x)} [ S(x, a, t) (G(t \mid x, a) - 1) ],
\end{align*}
where $G(t \mid x, a) \coloneq P(C > t \mid x, a)$ is the censoring survival function.
\end{prop}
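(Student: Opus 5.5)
The approach is to compute the expectation of each naive estimator directly and subtract $V(\pi_e,t)$. For naive IPS, I would use the tower property, conditioning first on $(x_i,a_i)$. By the common support assumption (Assumption~\ref{ass:common_support}), the standard change of measure
\begin{equation*}
\mathbb{E}_{p(x)\pi_0(a\mid x)}\!\left[\frac{\pi_e(a\mid x)}{\pi_0(a\mid x)} f(x,a)\right]=\mathbb{E}_{p(x)\pi_e(a\mid x)}[f(x,a)]
\end{equation*}
is valid, with $f(x,a)=P(T>t\mid x,a)$. Since the samples are i.i.d., it is enough to compute this for a single term.

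The key step is to evaluate $P(T>t\mid x,a)$. Because $T=\min\{L,C\}$, we have $\{T>t\}=\{L>t\}\cap\{C>t\}$. The conditional independence assumption (Assumption~\ref{ass:independent_censoring}) then lets this probability factor:
\begin{equation*}
P(T>t\mid x,a)=P(L>t\mid x,a)\,P(C>t\mid x,a)=S(x,a,t)\,G(t\mid x,a).
\end{equation*}
Hence $\mathbb{E}[\hat V_{\text{IPS}}]=\mathbb{E}_{p(x)\pi_e(a\mid x)}[S(x,a,t)G(t\mid x,a)]$. Subtracting $V(\pi_e,t)=\mathbb{E}_{p(x)\pi_e}[S(x,a,t)]$ gives the stated bias $\mathbb{E}_{p(x)\pi_e}[S(G-1)]$. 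This quantity is nonpositive, and it is strictly negative whenever censoring before $t$ occurs with positive probability on the support of $\pi_e$; that is the sense in which $P(L>C)>0$ matters.

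For naive DR, I would write it in the form of Eq.~(\ref{eq:DR}):
\begin{equation*}
\frac1n\sum_i\Big[\sum_{a}\pi_e(a\mid x_i)\hat S(x_i,a,t)+\frac{\pi_e(a_i\mid x_i)}{\pi_0(a_i\mid x_i)}\big(\mathbb{I}\{T_i>t\}-\hat S(x_i,a_i,t)\big)\Big].
\end{equation*}
Taking expectations and applying the same change of measure to the correction term, the model terms $\mathbb{E}_{p(x)\pi_e}[\hat S]$ cancel exactly. What remains is $\mathbb{E}_{p(x)\pi_e}[SG]$, so the DR bias equals the IPS bias.

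The main obstacle is to state precisely how $\hat S$ enters. The exact cancellation requires $\hat S$ to be fixed, or independent of the sample used in the estimator, for example via cross-fitting; otherwise extra dependence terms appear. I would state this convention explicitly and treat $\hat S$ as a deterministic function when taking expectations. Everything else is routine bookkeeping.
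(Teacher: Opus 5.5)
Your proposal is correct and matches the paper's proof: factor $P(T>t\mid x,a)=S(x,a,t)\,G(t\mid x,a)$ using conditional independent censoring, change measure from $\pi_0$ to $\pi_e$ through the importance weight, and note that the $\hat S$ terms cancel exactly in the naive DR estimator. Your convention that $\hat S$ is held fixed (or cross-fitted) states explicitly an assumption the paper's proof makes without saying so.
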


The proof is provided in Appendix~\ref{app:bias_naive_estimators}.
Since $G(t \mid x, a) \leq 1$, the bias term is always non-positive, meaning that these naive estimators systematically underestimate the policy value, as visually depicted in Figure~\ref{fig:teaser}.
We can also see that the bias increases under severe censoring conditions, such as short follow-up periods or low event rates.
In such cases, $G(t \mid x, a)$ becomes significantly smaller than one, resulting in a more severe underestimation bias.

\section{Proposed Approach: Censoring-Aware Off-Policy Evaluation and Learning}
To obtain unbiased estimates of expected survival probability or RMST, it is essential to incorporate the censoring mechanism directly into the OPE framework. This section proposes two new estimators that properly account for censoring and yield unbiased or at least bias-reduced policy value estimates.

\subsection{Unbiased Estimation of Survival Policy Value}
\label{subsec:proposed-ope}
Our first proposed estimator, namely \textbf{Inverse Probability of Censoring Weighted-Inverse Propensity Scoring~(IPCW-IPS)}, directly corrects for the censoring bias.
As shown in the derivation of Proposition~\ref{prop:bias_naive_estimators}~(see Appendix~\ref{app:bias_naive_estimators}), the expectation of the naive outcome $\mathbb{I}\{T_i > t\}$ yields the confounded quantity $S(x_i, a_i, t) G(t \mid x_i, a_i)$ rather than the true survival probability $S(x_i, a_i, t)$.
To deal with this bias, we propose to employ \textbf{the Inverse Probability of Censoring Weighting~(IPCW)} technique~\citep{robins1995analysis,robins2000correcting}.
Specifically, we define IPCW-IPS by additionally weighting each observed indicator $\mathbb{I} \{ T_i > t \}$ with the inverse of the estimated probability that the sample remains uncensored at time $t$, denoted by $\hat{G}(t \mid x_i, a_i)$ as
\begin{equation}
    \hat{V}_{\text{IPCW-IPS}}(\pi_e, t; \mathcal{D}) \coloneq \frac{1}{n} \sum_{i=1}^{n} w(x_i, a_i) \frac{\mathbb{I} \{ T_i > t \} }{\hat{G}(t \mid x_i, a_i)}.
\end{equation}
where $w(x_i, a_i) \coloneq \pi_e(a_i \mid x_i) / \pi_0(a_i \mid x_i)$. 

While IPCW-IPS is designed to be unbiased, it may be susceptible to high variance particularly when censoring is heavy~(i.e., when $\hat{G}(t \mid x, a)$ is small), as the inverse probability weight becomes large.
To mitigate this variance inflation and improve estimation reliability further, we extend IPCW-IPS to \textbf{IPCW-Doubly Robust~(IPCW-DR)} defined as
\begin{align}
\begin{split}
& \hat{V}_{\text{IPCW-DR}} (\pi_e, t; \mathcal{D}) \\
& \coloneq \frac{1}{n} \sum_{i=1}^{n} \Biggl( w(x_i, a_i) \left( \frac{\mathbb{I} \{ T_i > t \} }{\hat{G}(t \mid x_i, a_i)} - \hat{S}(x_i, a_i, t) \right) \\
&\hspace{4cm}+ \mathbb{E}_{\pi_e(a \mid x)} [\hat{S}(x_i, a, t)] \Biggr).
\end{split}
\end{align}
The IPCW-DR estimator extends the standard DR framework to the setting of censored survival outcomes.
Structurally, this estimator combines a model-based estimate derived via the direct method~(DM)~\citep{beygelzimer2009offset}, with a bias-correction term based on the weighted residual between the IPCW-adjusted reward and the outcome model.

We first establish the unbiasedness of our proposed estimators. The following theorem guarantees that both IPCW-IPS and IPCW-DR provide unbiased estimates of the policy value, with IPCW-DR further enjoying double robustness.
\begin{thm}
\label{thm:unbiasedness}
Under Assumptions~\ref{ass:common_support} and \ref{ass:independent_censoring}, provided that the censoring model $G(t \mid x, a)$ is correctly specified
\begin{enumerate}
    \item The IPCW-IPS estimator is unbiased for the expected survival probability $V(\pi_e, t)$ if the propensity score $\pi_0(a \mid x)$ is correctly specified.
    \item The IPCW-DR estimator is unbiased for $V(\pi_e, t)$ if either the propensity score $\pi_0(a \mid x)$ or the outcome model $\hat{S}(x, a, t)$ is correctly specified.
\end{enumerate}
The proofs are provided in Appendix~\ref{app:unbiasedness}.
\end{thm}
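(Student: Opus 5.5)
The plan is to condition on $(x_i,a_i)$ and use the tower property. The common key step is the identity underlying Proposition~\ref{prop:bias_naive_estimators}. By Assumption~\ref{ass:independent_censoring},
\begin{equation*}
\mathbb{E}[\mathbb{I}\{T > t\} \mid x,a] = P(L>t, C>t \mid x,a) = S(x,a,t)\,G(t\mid x,a).
\end{equation*}
When the censoring model is correctly specified, $\hat{G}(t\mid x,a)=G(t\mid x,a)$. This ratio is a deterministic function of $(x,a)$, so it can be pulled out of the conditional expectation. The result is
\begin{equation*}
\mathbb{E}\bigl[\mathbb{I}\{T>t\}/\hat G(t\mid x,a) \mid x,a\bigr]=S(x,a,t).
\end{equation*}
This needs $G(t\mid x,a)>0$ wherever $\pi_e(a\mid x)>0$. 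I would state this as an implicit positivity condition required for the estimator to be well defined.

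For part 1, the samples are i.i.d., so I would work with a single term. Taking the expectation over $p(x)\pi_0(a\mid x)$ of $w(x,a)\,S(x,a,t)$ gives $\sum_a \pi_0(a\mid x)\frac{\pi_e(a\mid x)}{\pi_0(a\mid x)}S(x,a,t)$. By Assumption~\ref{ass:common_support}, every action with $\pi_e(a\mid x)>0$ has $\pi_0(a\mid x)>0$. So the change of measure is valid, and the sum equals $\mathbb{E}_{\pi_e(a\mid x)}[S(x,a,t)]$. Averaging over $p(x)$ gives $V(\pi_e,t)$.

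For part 2, I would apply the same conditional step inside the residual. The conditional expectation of a single IPCW-DR term given $x$ is
\begin{equation*}
\mathbb{E}_{\pi_0(a\mid x)}\bigl[w(x,a)\bigl(S(x,a,t)-\hat S(x,a,t)\bigr)\bigr]+\mathbb{E}_{\pi_e(a\mid x)}[\hat S(x,a,t)].
\end{equation*}
There are then two cases.
\begin{enumerate}
\item If the propensity score is correct, the first term becomes $\mathbb{E}_{\pi_e}[S-\hat S]$ by the importance-weighting step from part 1. The $\hat S$ contributions cancel and leave $\mathbb{E}_{\pi_e}[S]$.
\item If $\hat S=S$ but the weights use a possibly misspecified $\hat\pi_0$, the residual term is identically zero. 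The result is $\mathbb{E}_{\pi_e}[S]$ directly. Here I only need $\hat\pi_0>0$ on the support of $\pi_e$ so that the weight is finite.
\end{enumerate}
Integrating over $p(x)$ finishes the proof in both cases.

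The main obstacle is a technical one: treating $\hat G$ and $\hat S$ as fixed functions, independent of the sample they are evaluated on. In practice they are estimated, possibly from the same data, so I would state the result with these nuisance functions trained on separate data or taken as given. In that setting "correctly specified" means equality with the true functions. The rest is routine. I would also note that integrating over $t\in[0,\tau]$ and using Fubini carries unbiasedness over to the RMST value $V^{\tau}(\pi_e)$.
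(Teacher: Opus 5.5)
Your proposal is correct and follows essentially the same route as the paper: condition on $(x,a)$, use $\mathbb{E}[\mathbb{I}\{T>t\}\mid x,a]=S(x,a,t)G(t\mid x,a)$ from conditional independent censoring so that the IPCW weight cancels $G$, then apply the importance-weighting change of measure, with the same two-case argument for IPCW-DR. Your extra remarks are implicit in the paper's proof and make the argument more careful without changing it: positivity of $G$, treating $\hat G$ and $\hat S$ as fixed (e.g., fit on separate data), and requiring $\hat\pi_0>0$ on the support of $\pi_e$ in the outcome-model case.
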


While unbiasedness is a desirable statistical property for OPE, the variance of an estimator is an equally critical component of the MSE.
The following theorem explicitly characterizes the variance of our estimators and theoretically establishes the variance reduction property of IPCW-DR.
\begin{thm}
\label{thm:variance}
Under Assumptions~\ref{ass:common_support} and \ref{ass:independent_censoring}, provided that the propensity score $\pi_0(a \mid x)$, the censoring model $G(t \mid x, a)$, and the outcome model $S(x, a, t)$ are correctly specified, the variance of the IPCW-IPS estimator is
\begin{equation*}
\begin{split}
n\mathbb{V}[\hat{V}_{\text{IPCW-IPS}}] &= \mathbb{E}_{p(x) \pi_0(a \mid x)}\left[(w(x,a))^2 \sigma^2(x,a,t)\right] \\
&\quad + \mathbb{E}_{p(x)}[\mathbb{V}_{\pi_0(a \mid x)}[w(x,a)S(x,a,t)]] \\
&\quad + \mathbb{V}_{p(x)}[S^{\pi_e}(x,t)],
\end{split}
\end{equation*}
where $\sigma^2(x,a,t) \coloneq S(x,a,t)/G(t|x,a) - S(x,a,t)^2$ represents the variance inflation due to censoring.
Furthermore, the variance of the IPCW-DR estimator is
\begin{equation*}
\begin{split}
n\mathbb{V}[\hat{V}_{\text{IPCW-DR}}] &= \mathbb{E}_{p(x) \pi_0(a \mid x)}\left[w(x,a)^2 \sigma^2(x,a,t)\right] \\ 
&\quad + \mathbb{V}_{p(x)}[S^{\pi_e}(x,t)].
\end{split}
\end{equation*}
Consequently, IPCW-DR achieves a variance reduction of $\mathbb{E}_{p(x)}[\mathbb{V}_{\pi_0(a \mid x)}[w(x,a)S(x,a,t)]] \ge 0$, ensuring $\mathbb{V}[\hat{V}_{\text{IPCW-DR}}] \le \mathbb{V}[\hat{V}_{\text{IPCW-IPS}}]$.
The proofs are provided in Appendix~\ref{app:variance}.
\end{thm}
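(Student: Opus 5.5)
The plan is a law-of-total-variance decomposition applied to a single summand. Since the samples are i.i.d., $n\mathbb{V}[\hat V] = \mathbb{V}[Z]$, where $Z$ is the per-sample term. Write $Y \coloneq \mathbb{I}\{T>t\}/G(t\mid x,a)$ and $S^{\pi_e}(x,t) \coloneq \mathbb{E}_{\pi_e(a\mid x)}[S(x,a,t)]$. The variance is then conditioned in three layers: first on $(x,a)$, then on $x$, then over $p(x)$:
\begin{equation*}
\mathbb{V}[Z] = \mathbb{E}_{p(x)\pi_0}\bigl[\mathbb{V}[Z\mid x,a]\bigr] + \mathbb{E}_{p(x)}\bigl[\mathbb{V}_{\pi_0}[\mathbb{E}[Z\mid x,a]]\bigr] + \mathbb{V}_{p(x)}\bigl[\mathbb{E}[Z\mid x]\bigr].
\end{equation*}

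The first step is to compute the conditional moments of $Y$. Assumption~\ref{ass:independent_censoring} gives $P(T>t\mid x,a) = P(L>t\mid x,a)\,P(C>t\mid x,a) = S(x,a,t)\,G(t\mid x,a)$; this is the same computation as in Proposition~\ref{prop:bias_naive_estimators}. Hence $\mathbb{E}[Y\mid x,a] = S$. Because the indicator is idempotent, $\mathbb{E}[Y^2\mid x,a] = SG/G^2 = S/G$, and so $\mathbb{V}[Y\mid x,a] = S/G - S^2 = \sigma^2(x,a,t)$.

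For IPCW-IPS, take $Z = w(x,a)\,Y$.
\begin{itemize}
\item The first layer gives $w^2\sigma^2$.
\item The second layer is $\mathbb{E}_{p(x)}[\mathbb{V}_{\pi_0}[wS]]$.
\item For the third layer, $\mathbb{E}[Z\mid x] = \sum_a \pi_0 w S = S^{\pi_e}(x,t)$. This uses common support (Assumption~\ref{ass:common_support}) so that no $\pi_e$ mass is lost, together with the correct propensity.
\end{itemize}
Together these give the three stated terms.

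For IPCW-DR with $\hat S = S$, take $Z = w(Y - S) + \mathbb{E}_{\pi_e}[S(x,a,t)]$.
\begin{itemize}
\item Given $(x,a)$, the only random part is $wY$, so the first layer is again $w^2\sigma^2$.
\item $\mathbb{E}[Z\mid x,a] = S^{\pi_e}(x,t)$ does not depend on $a$, so the middle term vanishes.
\item The third layer equals $\mathbb{V}_{p(x)}[S^{\pi_e}]$.
\end{itemize}
Subtracting the two expressions shows the reduction is exactly $\mathbb{E}_{p(x)}[\mathbb{V}_{\pi_0}[wS]] \ge 0$, since a variance is nonnegative.

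The main subtlety is the idempotence step $\mathbb{E}[Y^2\mid x,a] = S/G$. It requires $G(t\mid x,a) > 0$ wherever $\pi_e$ puts mass; I would state this as an implicit positivity condition. A second caveat is that $G$ and $\pi_0$ are treated as known rather than estimated. If they were estimated on the same data, extra variance terms would appear, so I would make explicit that the models are fixed (or cross-fitted) and correct, which is how I read ``correctly specified'' in the statement.
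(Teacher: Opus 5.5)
Your proposal is correct and follows essentially the same route as the paper: a law-of-total-variance decomposition conditioning on $(x,a)$ and then on $x$. You get $\mathbb{E}[Y\mid x,a]=S$ and $\mathbb{V}[Y\mid x,a]=S/G-S^2$ from conditional independent censoring and idempotence of the indicator, and for IPCW-DR the middle term vanishes because $\mathbb{E}[Z\mid x,a]=S^{\pi_e}(x,t)$ does not depend on $a$. Your explicit caveats, positivity of $G(t\mid x,a)$ where $\pi_e$ has mass and treating $\pi_0$, $G$, $\hat S$ as fixed rather than estimated on the same data, are assumptions the paper's proof uses only implicitly.
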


This theorem provides a formal justification for the instability of IPCW-IPS under heavy censoring.
Specifically, the term $\sigma^2(x,a,t)$ reveals that the variance is inversely proportional to the censoring survival function $G(t \mid x, a)$.
Thus, when censoring is heavy (i.e., $G(t \mid x, a)$ is small), IPCW-IPS can suffer from significant variance.
In contrast, IPCW-DR consistently reduces variance by leveraging the outcome model, making it a more reliable choice in practice.

While we have so far focused on estimating the point-wise survival probability $V(\pi_e, t)$, our framework naturally extends to the estimation of the RMST $V^{\tau}(\pi_e)$ defined in Eq.~(\ref{eq:def-rmst}).
Since the RMST is the integral of the survival probability, we can construct unbiased estimators for RMST by integrating IPCW-IPS and IPCW-DR over the time horizon $[0, \tau]$.
The detailed formulations and theoretical guarantees for these RMST estimators are provided in Appendix~\ref{app:extension_RMST}.

\subsection{Off-Policy Learning for Survival Maximization}
\label{subsec:policy_learning}
In addition to the problem of OPE, we can formulate the problem of learning an optimal policy to maximize the survival outcome under censoring.
Let $\Pi$ be a predefined policy class such as neural networks.
The goal of off-policy learning is to find a policy $\pi_{\theta} \in \Pi$ that maximizes the policy value based on the logged data $\mathcal{D}$.
Due to space limitations, we will focus on the policy learning objective for the expected survival probability $V(\pi, t)$.\footnote{Note that the formulation for maximizing RMST proceeds analogously by integrating the objective over the time horizon.}
Using our proposed estimators, we can define the policy learning objectives as $\theta \coloneq \arg\max_{\theta} \hat{V} (\pi_{\theta}, t; \mathcal{D})$. 

A typical approach to solve this optimization problem is to use gradient-based approach~\citep{huang2020importance,saito2024potec}, which updates the policy parameter via iterative gradient ascent as $\theta_{k + 1} \leftarrow \theta_{k} + \nabla_{\theta} V(\pi_{\theta}, t)$.
Because the true gradient 
\begin{equation}
\label{eq:policy_gradient}
\nabla_{\theta}V(\pi_{\theta}, t) =\mathbb{E}_{p(x) \pi_{\theta}(a \mid x)} \left[ S(x, a, t) \nabla_{\theta} \log \pi_{\theta} (a \mid x) \right].
\end{equation}
is unknown, we need to estimate it from the logged dataset $\mathcal{D}$ where we can apply our proposed estimators. For example, IPCW-IPS for the policy gradient can be constructed by replacing the true survival probability $S(x, a, t)$ in Eq.~\eqref{eq:policy_gradient} with the IPCW-corrected outcome $\mathbb{I} \{ T > t \} / \hat{G}(t \mid x, a)$ as
\begin{align}
\label{eq:ipcw-ips-opl}
\begin{split}
&\widehat{\nabla_{\theta} V}_{\text{IPCW-IPS}} (\pi_{\theta}, t; \mathcal{D}) \\
&\coloneq \frac{1}{n} \sum_{i=1}^{n} w^{\prime}(x, a) \frac{\mathbb{I} \{ T_i > t \} }{\hat{G}(t \mid x_i, a_i)} \nabla_{\theta} \log \pi_{\theta} (a_i \mid x_i),
\end{split}
\end{align}
where $w^{\prime}(x, a) \coloneq \pi_{\theta}(a \mid x) / \pi_0(a \mid x)$.
Similarly, the IPCW-DR estimator for the policy gradient can be constructed by replacing the true survival probability $S(x, a, t)$ in Eq.~\eqref{eq:policy_gradient} with the IPCW-DR outcome model as follows.
\begin{align}
\label{eq:ipcw-dr-opl}
\begin{split}
&\widehat{\nabla_{\theta} V}_{\text{IPCW-DR}} (\pi_{\theta}, t; \mathcal{D}) \\
&\coloneq \frac{1}{n} \sum_{i=1}^{n} \Biggl( w^{\prime}(x_i, a_i) \hat{\delta}(x_i, a_i, t) \nabla_{\theta} \log \pi_{\theta}(a_i \mid x_i) \\
&\quad+ \mathbb{E}_{\pi_{\theta}(a \mid x)} [\hat{S}(x_i, a, t) \nabla_{\theta} \log \pi_{\theta}(a_i \mid x_i)] \Biggr),
\end{split}
\end{align}
where let $\hat{\delta}(x, a, t) \coloneq \frac{\mathbb{I} \{ T > t \} }{\hat{G}(t \mid x, a)} - \hat{S}(x, a, t)$.

These estimators are unbiased against the true policy gradient $\nabla_{\theta} V(\pi_{\theta}, t)$ under the same conditions as Theorem~\ref{thm:unbiasedness}.
IPCW-DR for the policy gradient is superior to IPCW-IPS in terms of variance under the same conditions as in Section~\ref{subsec:proposed-ope}, leading to a more efficient policy learning.

\paragraph{Extension to Constrained Off-Policy Learning.}
We can further extend our OPL framework to optimize survival outcomes \textit{under a budget constraint}, a common requirement in many real-world applications~\citep{cai2023marketing, zhang2021bcorle,zhou2023direct}.

Let $c(x, a)$ be the observed cost for a given $(x,a)$. We aim to find $\pi_\theta$ that maximizes $V(\pi_\theta, t)$ subject to $C(\pi_{\theta}) \coloneq \mathbb{E}_{p(x) \pi_\theta(a \mid x)} [c(x, a)] \leq B$.
This constrained optimization problem can be transformed using Lagrangian duality~\citep{geoffrion1974lagrangean} into a min-max problem as
$\min_{\lambda \geq 0} \max_{\theta} L(\pi_\theta, t, \lambda)$, where $L(\pi_\theta, t, \lambda) = V(\pi_\theta, t) - \lambda( C(\pi_\theta) - B )$.
To solve this problem, we can employ an iterative algorithm that alternates between updating $\theta$~(gradient ascent on $L$) and the Lagrangian multiplier $\lambda$.
The true gradient of the Lagrangian is 
\begin{equation}
\nabla_{\theta} L(\pi_\theta, t, \lambda) = \mathbb{E}_{p(x) \pi_\theta(a \mid x)} [ R_{\lambda}(x, a, t) \nabla_{\theta} \log \pi_\theta(a \mid x) ],
\end{equation}
where $R_{\lambda}(x, a, t) = S(x, a, t) - \lambda c(x, a)$.
Since $c(x, a)$ is fully observed and independent of censoring, we apply the IPCW correction exclusively to the survival component $S(x, a, t)$.
Thus, we can naturally obtain the IPCW-IPS and IPCW-DR estimators for $\nabla_\theta L$ by simply replacing the reward term in Eqs.~\eqref{eq:ipcw-ips-opl} and \eqref{eq:ipcw-dr-opl} with the Lagrangian reward $ \left( \mathbb{I} \{ T_i > t \} / \hat{G}(t \mid x, a) - \lambda c(x_i, a_i ) \right)$ and its  model-augmented counterpart, respectively.
The unbiasedness of these estimators against the true Lagrangian policy gradient is proved in Appendix~\ref{app:estimators-against-Lagrangian}.
This approach enables stable offline policy learning that rigorously satisfies budget constraints while handling the censoring issue.

\begin{figure*}[t]
  \centering
  \begin{subfigure}{0.9\textwidth}
    \centering
    \includegraphics[width=\linewidth]{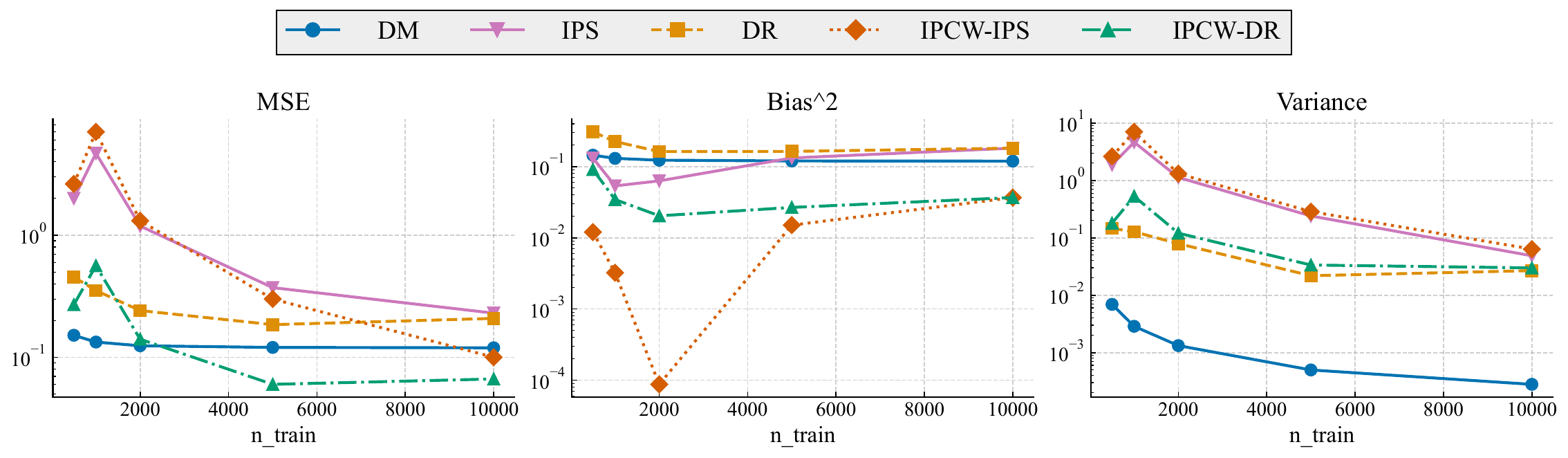} \vspace{-5mm}
    \caption{Impact of logged data size $n$ on OPE performance.}
    \label{fig:ope-n-train}
  \end{subfigure}

  \begin{subfigure}{0.9\textwidth}
    \centering
    \includegraphics[width=\linewidth]{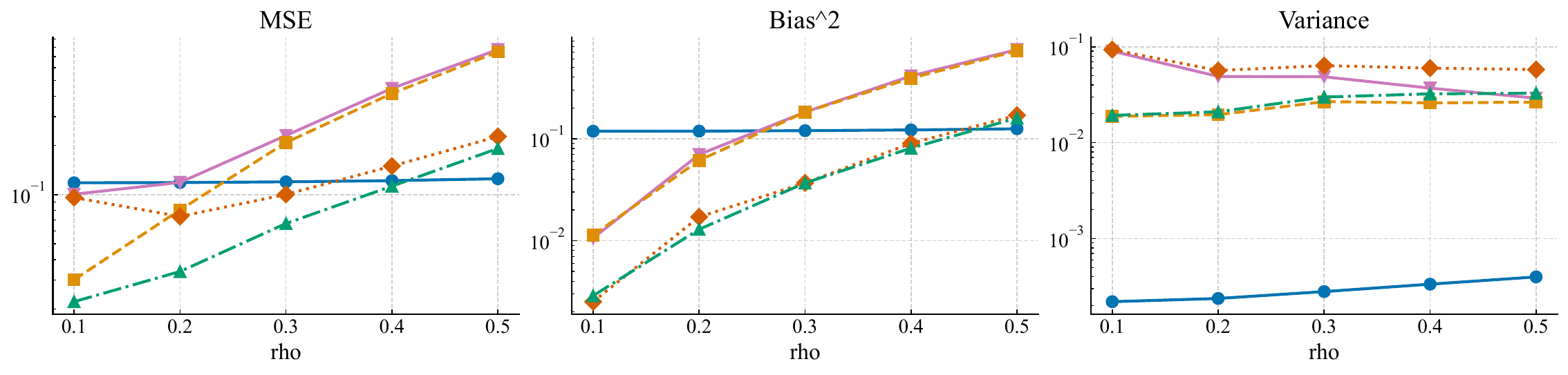} \vspace{-5mm}
    \caption{Impact of censoring rate $\rho_1$ on OPE performance.}
    \label{fig:ope-rho}
  \end{subfigure}

  \begin{subfigure}{0.9\textwidth}
    \centering
    \includegraphics[width=\linewidth]{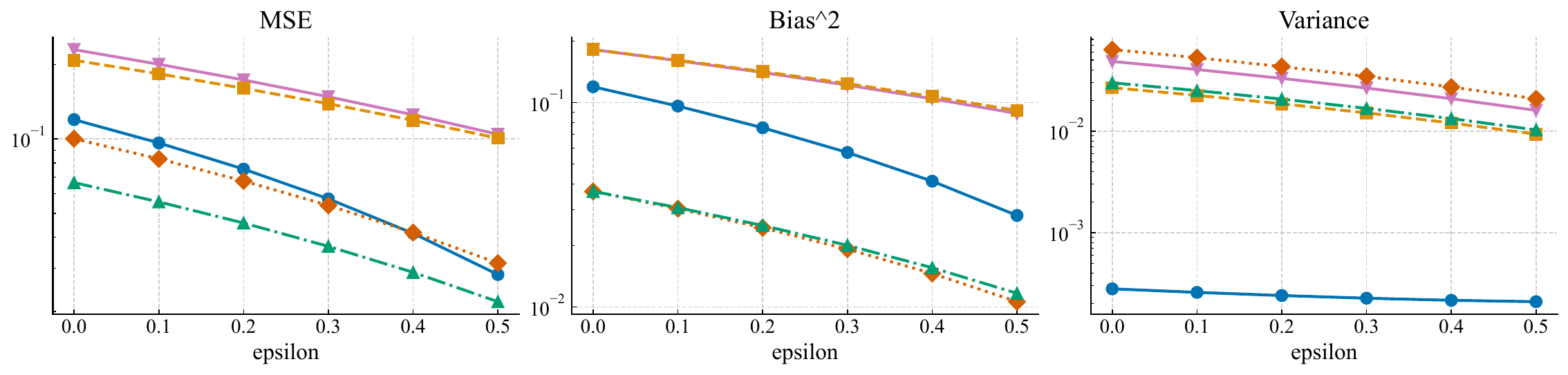} \vspace{-5mm}
    \caption{Impact of policy divergence $\epsilon$ on OPE performance.}
    \label{fig:ope-epsilon}
  \end{subfigure}
  \caption{
    OPE performance comparison across key experimental factors.
    The figure illustrates how MSE changes as we vary (a) the logged data size $n$, 
    (b) the censoring rate $\rho_1$, and (c) the divergence between the logging and evaluation policies $\epsilon$.
  }
  \label{fig:ope-results-all}
\end{figure*}

\section{Simulation Studies}
This section demonstrates the effectiveness of our proposed methods in controlled simulated environments, which allow us to systematically analyze their behavior under known and carefully varied experimental conditions.

\paragraph{Data Generating Process.}
We construct the logged dataset $\mathcal{D} = \{ (x_i, a_i, T_i, r_i)\}_{i = 1}^{n}$ to simulate OPE/L for survival outcomes under right-censoring.
The data-generating process samples context vectors $x$, selects actions $a$ using a logging policy $\pi_0$, and generates latent survival times $L$ along with censoring times $C$. 
This design enables direct control over the degree of censoring. 
Detailed descriptions of the data-generating process are provided in Appendix~\ref{app:simulation_data_generating_process}.

\paragraph{OPE Performance Comparison.}
In the OPE experiments, we compare IPCW-IPS and IPCW-DR with the naive baselines including DM, IPS and DR, all of which ignore the censoring mechanism.
We investigate the OPE accuracy under varying experimental conditions defined by three key factors: (i) training sample size $n$, (ii) censoring rate $\rho_1$, and (iii) policy divergence $\epsilon$.
First, we vary $n \in \{ 500, 1,000, 2,000, 5,000, 10,000 \}$ to evaluate asymptotic behavior of the estimators.
Second, we vary the censoring rate $\rho_1 = P(L > C)$ from $\{ 0.1, 0.2, 0.3, 0.4, 0.5 \}$ to test robustness against heavy censoring.
We obtain each target $\rho_1$ by numerically adjusting the censoring distribution parameters, as described in Appendix~\ref{app:simulation_data_generating_process}.
Third, to analyze the impact of the discrepancy between the logging and evaluation policies, we define the evaluation policy $\pi_e$ using the $\epsilon$-greedy rule: 
\begin{equation*}
\pi_e(a \mid x; \epsilon) = (1 - \epsilon) \cdot \mathbb{I} \{ a = \text{argmax}_{a'} V^\tau(x, a') \} + \frac{\epsilon}{|\mathcal{A}|}.
\end{equation*}
We configure the logging and evaluation policies in accordance with the synthetic experimental protocols in prior relevant studies~\citep[e.g.][]{kiyohara2024off, peng2023offline}.
By varying $\epsilon \in \{ 0.1, 0.2, 0.3, 0.4, 0.5 \}$, we examine how the divergence between the logging policy $\pi_0$ and the target policy $\pi_e$ affects estimation variance.

We evaluate and compare the MSE, squared bias, and variance of the estimators over 100 independent trials where the MSE is the most important. Additional implementation details are provided in Appendix~\ref{app:simulation_ope_details} and our code to reproduce the results is shared as a supplementary material.

\paragraph{OPL Performance Comparison.}
We evaluate the performance of OPL methods in learning a policy $\pi_\theta$, parameterized by a Multi-Layer Perceptron (MLP)~\citep{lecun2015deep}, to maximize the expected RMST. Our comparison includes two learning paradigms, including a regression-based approach~\citep{jeunen2021pessimistic}, which selects actions using RMST predictions from an outcome model, and policy gradient–based approaches~\citep{huang2020importance}, which optimize differentiable OPE estimators directly as objective functions.
We analyze the learning performance under varying training sample sizes $n$ and censoring rates $\rho_1$.
We evaluate the learned policies on an independent large-scale test set using the improvement ratio over the logging policy, defined as $V^{\tau}(\hat{\pi}) / V^{\tau}(\pi_0)$.  Additional implementation details are provided in Appendix~\ref{app:simulation_opl_details}.

\paragraph{Simulation Results.}
Figure~\ref{fig:ope-results-all} illustrates the OPE performance comparison results across varying logged data sizes $n$, censoring rates $\rho_1$, and policy divergences $\epsilon$.

Figure~\ref{fig:ope-n-train} interestingly shows that naive IPS and DR do not reduce squared bias even when we increase the sample size, and their MSE remains high. 
This observation directly reflects Proposition~\ref{prop:bias_naive_estimators}, which explains how censoring induces systematic underestimation for those typically unbiased estimators. 
In contrast, IPCW-IPS and IPCW-DR steadily lower both MSE and squared bias as $n$ grows, even in the existence of severe censoring, and these trends align with the unbiasedness guarantees in Theorem~\ref{thm:unbiasedness}. 
DM achieves the smallest variance, but it retains a non-vanishing and substantial bias because the outcome model alone cannot fully capture the underlying survival dynamics. IPCW-DR corrects this bias while maintaining a strong bias--variance balance.

Next, in Figure~\ref{fig:ope-rho}, we observe that the performance of naive estimators deteriorates rapidly as the censoring rate increases.
Our IPCW-based estimators, however, continue to produce much lower MSE even under heavy censoring. 
IPCW-DR, in particular, avoids the variance inflation that often appears in high-censoring settings, which highlights its robustness. 
Moreover, IPCW-DR remains highly stable under high policy divergence as shown in Figure~\ref{fig:ope-epsilon}, whereas naive estimators continue to show unreliable accuracy.

\begin{figure}[!t]
  \centering
  \begin{subfigure}{0.95\linewidth}
    \centering
    \includegraphics[width=\linewidth]{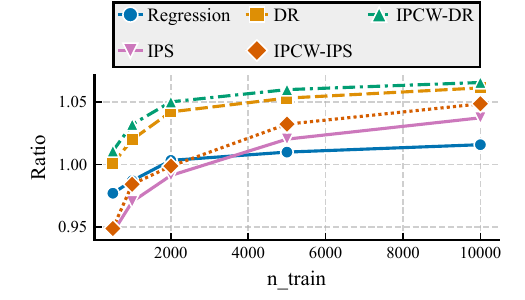} \vspace{-6mm}
    \caption{Impact of training sample size $n$ on OPL performance.}
    \label{fig:opl-n-train}
  \end{subfigure}

  \begin{subfigure}{0.95\linewidth}
    \centering
    \includegraphics[width=\linewidth]{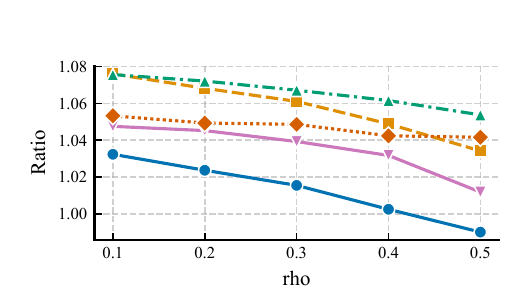} \vspace{-6mm}
    \caption{Impact of censoring rate $\rho_1$ on OPL performance.}
    \label{fig:opl-rho}
  \end{subfigure}

  \caption{
    OPL performance comparison. 
    The improvement ratio $V^\tau(\hat{\pi}) / V^\tau(\pi_0)$ quantifies how effectively each method learns a superior policy compared to the logging policy optimalities.
  }
  \label{fig:opl-all}
\end{figure}

Figure~\ref{fig:opl-all} summarizes the OPL results. In Figure~\ref{fig:opl-n-train}, policies learned based on IPCW-IPS and IPCW-DR achieve substantially higher improvement ratios compared to policy learners based on naive IPS and DR. 
Naive methods frequently underestimate survival times and thus fail to choose optimal actions, while IPCW-based learners appropriately deal with this bias and consistently identify better policies. 
Our methods continue to outperform naive learners as censoring increases as well (Figure~\ref{fig:opl-rho}). 
These results demonstrate that our IPCW-based approaches achieve robust and reliable policy improvements across a range of realistic conditions.

\section{Real-World Case Study: Starbucks Customer Retention}
\label{sec:application-real-world-dataset}
This section explores the practical effectiveness of our proposed estimators through a semi-synthetic experiment based on the Starbucks Customer Reward Program dataset\footnote{\url{https://www.kaggle.com/datasets/blacktile/starbucks-app-customer-reward-program-data}}.  
Real-world datasets capture rich behavioral patterns, but they rarely offer counterfactual outcomes or complex censoring structures that allow a rigorous comparison of off-policy estimators or learners. 
To address these limitations and to perform yet a meaningful experiment, we first construct a data-driven “Oracle’’ environment directly from the Starbucks mobile app logs. 
The dataset contains detailed customer information such as age, income, membership tenure, and gender, together with offer attributes such as difficulty, reward amount, delivery channel, and duration.  
We define a discrete action space $\mathcal{A}$ containing ten offer types.  
In this environment, the survival time $T$ measures how long a user takes to make the first purchase after receiving an offer.  
Our goal in this environment is to evaluate policies that shorten this purchase cycle, which corresponds to minimizing RMST over the validity period.  
We treat a purchase within the period as an event and regard all remaining observations as censored.

Furthermore, to examine estimator robustness under challenging censoring mechanisms, we introduce artificial censoring that depends on user demographics as their features.  
The original dataset mainly reflects simple Type I censoring due to offer expiration, which does not adequately stress-test modern OPE estimators.  
By generating censoring times that interact with demographic features, we create scenarios where the random censoring assumption fails in a controlled and interpretable way.  
This setting allows our experiment to reveal when naive estimators fail and how IPCW-based corrections address those failures.  
We describe the full data construction procedure and the censoring mechanism for this experiment in Appendix~\ref{app:starbucks_details}.

\begin{table}[tb]
\centering
\renewcommand{\arraystretch}{1.2}
\begin{tabular}{cccccc}
\toprule
Estimator & MSE              & Squared bias    & Variance        \\ \midrule
DM        & 30.7             & 29.2            & \textbf{1.5}    \\
IPS       & 216.7            & 213.7           & 3.1             \\
DR        & 190.1            & 188.5           & \underline{1.6} \\ 
IPCW-IPS (Ours)  & \underline{11.8} & \textbf{1.4}    & 10.5            \\
IPCW-DR (Ours)  & \textbf{10.7}    & \underline{4.4} & 6.4             \\
\bottomrule
\end{tabular}
\caption{OPE results on the Starbucks dataset. The best and second-best results are highlighted in \textbf{bold} and \underline{underlined}, respectively.}
\label{tab:starbucks-ope-results}
\end{table}

\begin{table*}[t!]
\centering
\renewcommand{\arraystretch}{1.2}
\begin{tabular}{lccc}
\toprule
Learner     & RMST~(Mean $\pm$ Std) $\downarrow$ & Cost~(Mean $\pm$ Std)  & Feasible Rate $\uparrow$ \\ \midrule
Regression  & 48.86 $\pm$ 1.21                   & 2.54 $\pm$ 0.30        & 0.80                     \\
IPS         & 39.61 $\pm$ 1.90                   & 2.44 $\pm$ 0.20        & \underline{0.95}         \\
DR          & \underline{39.11 $\pm$ 2.06}       & 2.51 $\pm$ 0.23        & 0.90                     \\ 
IPCW-IPS (Ours)    & 39.41 $\pm$ 1.73                   & 2.43 $\pm$ 0.20        & \textbf{0.99}            \\
IPCW-DR (Ours)    & \textbf{38.16 $\pm$ 1.41}          & 2.53 $\pm$ 0.20        & \underline{0.95}         \\
\bottomrule
\end{tabular}
\caption{
Performance comparison of constrained OPL methods averaged over 100 trials with budget $B=2.82$.
Feasible rate denotes the proportion of trials satisfying the constraint.
The symbols $\downarrow$ and $\uparrow$ indicate that lower and higher values are better, respectively.
The best and second-best results are highlighted in \textbf{bold} and \underline{underlined}, respectively.
}
\label{tab:starbucks-constrained-opl-results}
\end{table*}

\paragraph{OPE Results.}
Table~\ref{tab:starbucks-ope-results} summarizes the OPE results on the Starbucks dataset.
As shown in Table~\ref{tab:starbucks-ope-results}, IPCW-IPS and IPCW-DR deliver a clear and substantial reduction in MSE compared with IPS and DR.  
Both IPS and DR accumulate substantial squared bias, a pattern that directly reflects the theoretical insight shown in Proposition~\ref{prop:bias_naive_estimators}.  
In contrast, IPCW-IPS and IPCW-DR control squared bias effectively, demonstrating that the proposed correction mechanism performs reliably even in realistic settings that include complex and feature-dependent censoring structures.

We can also see from the table that DM records the smallest variance, but its large bias exposes its heavy reliance on the predictive accuracy of the outcome model.  
When the outcome model is misspecified, DM suffers from substantial bias regardless of the presence of censoring.
IPCW-DR, however, counterbalances this weakness, i.e., although it shows slightly larger bias than IPCW-IPS, it significantly reduces variance and achieves the lowest overall MSE among all estimators.  
This result highlights how the doubly robust structure stabilizes the offline estimation process.  
Thanks to its doubly robust structure, IPCW-DR significantly reduces variance compared to IPCW-IPS.
These findings highlight the practical value of our estimators for real-world marketing and retention. 
By consistently outperforming baselines even under complex censoring patterns and non-linear dynamics, IPCW-IPS and IPCW-DR empower practitioners to derive effective interventional strategies from logged data that shorten purchase cycles and strengthen long-term engagement.

\paragraph{Constrained OPL Analysis.}
We further investigate the practical utility of our framework by evaluating the constrained OPL methods on the semi-synthetic Starbucks dataset.  This experiment follows the same “Oracle’’ construction used in the OPE analysis.

To set up the constrained optimization problem, we first define the cost function $c(x, a)$ based on the reward amount attached to each offer.  
In line with common performance-based marketing practices, we charge this cost only when a user converts; otherwise, the cost remains zero.  
Next, we set the budget constraint $B$ to 80\% of the average cost incurred by the logging policy ($B = 2.82$), creating a realistic and challenging setting in which the learner must deliver faster customer engagement (lower RMST) while operating under a substantially reduced spending allowance.

We compare our Constrained IPCW-IPS and Constrained IPCW-DR with three constrained baselines: Constrained Regression, Constrained IPS, and Constrained DR. 
We solve the constrained optimization problem using Lagrangian duality to enable gradient-based learning.  Appendix~\ref{app:starbucks_details} provides full details regarding the baseline implementations, neural network architecture, and optimization hyperparameters.

Table~\ref{tab:starbucks-constrained-opl-results} reports the constrained OPL results on the Starbucks dataset. 
We observe that our IPCW-DR learner delivers the strongest RMST performance and clearly outperforms both naive OPE-based learners and the model-based regression approach. 
The naive IPS and DR learners underestimate survival outcomes as they ignore the censoring issue during policy learning, and the regression baseline performs even worse because it relies entirely on an imperfect outcome model. 
In contrast, IPCW-DR combines IPCW correction with a doubly robust structure, allowing the learner to optimize survival outcomes more effectively even under strict resource constraints.
The IPCW-IPS and IPCW-DR learners also reduce the standard deviation of RMST substantially compared with naive approaches, indicating that our censoring-aware corrections create more stable gradient signals throughout the policy optimization process. 
Naive estimators unintentionally inject noise into the learning signal because they interpret censored samples as early events, leading to their unstable learning performances. 
Our IPCW-based learners avoid this distortion and update the policy using signals that reflect the underlying counterfactual survival outcomes more accurately. As a result, they repeatedly derive high-quality policies across different data samples and demonstrate strong reliability in practice.

We also examine the feasible rate, the proportion of experiment runs in which the learned policy respects the budget constraint. 
IPCW-IPS achieves the highest feasible rate of 0.99, and IPCW-DR follows closely at 0.95. 
Because the IPCW correction provides more accurate survival estimates, the learner can track the budget boundary more precisely and maintain consistent constraint satisfaction within the Lagrangian optimization framework.
Overall, these results show that our censoring-aware OPL framework not only improves survival-based outcomes but also stabilizes the learning dynamics even with challenging constraints. 
This combination of predictive accuracy, robustness, and reliability positions our approach as a practical and powerful tool for real-world decision-making regarding survival outcomes.

\section{Concluding Remarks}
This paper proposes a novel framework for Off-Policy Evaluation~(OPE) and Off-Policy Learning~(OPL) with survival outcomes subject to right censoring.
Specifically, we introduce two novel estimators, IPCW-IPS and IPCW-DR, which incorporate the statistical technique of Inverse Probability of Censoring Weighting to correct for systematic bias induced by censoring.
Theoretically, we establish the unbiasedness of our proposed estimators and prove the doubly robust property and variance reduction effects inherent in IPCW-DR.
Extensive simulation studies demonstrate that, while conventional OPE methods suffer from severe underestimation bias, our estimators maintain high estimation accuracy across a wide range of experiment conditions.
Furthermore, real-world experiments using the Starbucks dataset validate that our proposed estimators are robust against complex real-world data structures and censoring mechanisms, and substantially outperform existing baselines in terms of both OPE and OPL.

\bibliographystyle{named}
\bibliography{ijcai26}

\input{ijcai26-new-appendix}

\end{document}

%% file: ijcai26-new-appendix.tex
\clearpage
\appendix
\numberwithin{equation}{section}

\section{Related Work}
\label{app:related_work}
Our work builds upon three primary areas of research: survival analysis, off-policy evaluation, and constrained policy learning.

\subsection{Survival Analysis}

Survival analysis is a statistical framework designed to model survival outcomes with censoring~\citep{jenkins2005survival}. 
Foundation approaches include the Kaplan-Meier estimator~\citep{kaplan1958nonparametric} for estimating survival functions and the Cox proportional hazards model~\citep{cox1972regression} for analyzing the relationship between covariates and survival times.
Recently, deep learning-based approaches such as DeepSurv~\citep{katzman2018deepsurv} and DeepHit~\citep{lee2018deephit} have demonstrated superior predictive performance by capturing non-linear dependencies.
However, these methods primarily focus on the predictive accuracy of the survival distribution and do not directly address the off-policy estimation of policy value and optimization of decision-making policies using logged data.

Recent literature has explored the estimation of optimal treatment regimes aiming to maximize survival outcomes.
\citet{jiang2017estimation} proposed value-based search methods to identify the optimal treatment regimes that maximize $t$-year survival probabilities, employing IPSW and DR estimators.
However, these estimators face critical limitations regarding censoring assumptions and optimization scalability.
First, since they typically assume independent censoring, where the censoring time is independent of covariates conditional on treatment, they can lead to bias under covariate-dependent censoring prevalent in real-world applications like customer churn.
Furthermore, their adaptation of derivative-free search algorithms~(e.g. genetic algorithms) for non-differentiable objectives limits their compatibility with modern gradient-based deep learning.
In contrast, a key advantage of our framework is that it explicitly models covariate-dependent censoring via IPCW and formulates differentiable objectives suitable for gradient-based optimization and constrained policy learning.

\subsection{Off-policy Evaluation for Incomplete Data}
Recent research has focused on adapting OPE for incomplete data.
For example, \citet{wang2025offpolicy} provide a comprehensive framework for OPE under the situation, where rewards or next-state information may be unobserved.
They demonstrate that naive estimators remain consistent under a Missing At Random~(MAR) assumption where the missing mechanism depends solely on observed information, but become biased when missingness depends on unobserved information.
Their approach is an inverse probability weighting estimator based on the dropout propensity, which they identify using a shadow variable in the challenging case of Missing Not At Random~(MNAR).

Unlike the missing data problem addressed by \citet{wang2025offpolicy}, our work focuses on the challenge of right censoring.
Right censoring is different from MNAR in two essential respects.
First, right-censoring provides partial information regarding the unobserved event time because the event is known not to have occurred before the censoring time, whereas MNAR implies a complete lack of information about the missing data.
Second, in our setting, naive estimators suffer from systematic underestimation bias due to censoring.
Crucially, this bias remains even under the random censoring assumption~(analogous to MAR), where standard OPE estimators for missing data would typically remain consistent.
This fundamental difference necessitates a novel approach specifically designed to handle the survival structure, distinct from general missing imputation.

Another related but distinct challenge in the survival context is truncation by death, addressed by \citet{chu2023multiply}.
They deal with the setting where an intermediate event, such as death, makes the primary outcome~(e.g., quality of life) undefined, rather than merely unobserved.
In this setting, they focus on estimating a ``survivor value function'', which is the value function for the subpopulation that would survive under any treatment.
This contrasts with our objective because their estimand focuses on a latent subpopulation, whereas our aim is to evaluate policy performance for the entire population.
Our setting therefore addresses a different and widely encountered challenge, namely right censoring in survival outcomes, which appears in applications such as maximizing subscription duration or evaluating policies for overall patient survival.

\subsection{Constrained Policy Learning}
In many operational environments, decision makers aim to maximize an outcome of interest, such as survival time and revenue, subject to a limited budget. 
This problem is fundamentally one of the resource allocation problems and has been studied in marketing and related fields~\citep{cai2023marketing,zhang2021bcorle,zhou2023direct}.
Such problems can be formulated as constrained optimization problems, where the objective is to maximize the expected reward while satisfying budget constraints.
Several approaches have used Lagrangian duality~\citep{geoffrion1974lagrangean} to find policies that maximize an objective function while satisfying constraints~\citep{ai2022lbcf, du2019improve,hao2020dynamic,zhao2019unified}.

Despite this active research on constrained OPL, existing approaches assume fully observed rewards.
To the best of our knowledge, current methods do not consider the complexities that arise when the objective is a survival outcome estimated under censoring, which leaves open an important methodological gap.
This paper fills this gap by extending our censoring aware estimators to the constrained OPL setting and provides a principled method for learning policies that aim to maximize survival duration while respecting practical budget limitations.

\section{Extensions and Theoretical Definitions}

\subsection{Estimation of Restricted Mean Survival Time~(RMST)}
\label{app:extension_RMST}

We now extend our proposed estimators to estimate the RMST, $V^{\tau}(\pi_e)$, under censoring.
Recall that the RMST is the integral of the expected survival probability $V(\pi_e, t)$ over the time horizon $[0, \tau]$. 
Based on this definition, we can construct the IPCW-IPS and IPCW-DR estimators for RMST by integrating the respective expected survival probability estimators over the same time horizon as follows.
\begin{align}
\label{eq:ipcw-ips-rmst}
\hat{V}_{\text{IPCW-IPS}}^{\tau} (\pi_e; \mathcal{D}) &\coloneq \int_{0}^{\tau} \hat{V}_{\text{IPCW-IPS}} (\pi_e, t; \mathcal{D}) dt, \\
\label{eq:ipcw-dr-rmst}
\hat{V}_{\text{IPCW-DR}}^{\tau} (\pi_e; \mathcal{D}) &\coloneq \int_{0}^{\tau} \hat{V}_{\text{IPCW-DR}} (\pi_e, t; \mathcal{D}) dt.
\end{align}

Crucially, the theoretical guarantees of the point-wise estimators naturally extend to the RMST estimation due to the linearity of the integration operator.
Specifically, the unbiasedness of the proposed estimators is preserved under the same conditions as Theorem~\ref{thm:unbiasedness}.
Moreover, the variance reduction property of IPCW-DR is maintained, since IPCW-DR achieves lower variance at each time point $t$, integrating it over the horizon $[0, \tau]$ yields an RMST estimator with consistently lower variance compared to IPCW-IPS.

\section{Detailed Experimental Settings and Additional Results}

\subsection{Data Generating Process}
\label{app:simulation_data_generating_process}

We construct the logged dataset $\mathcal{D} = \{ (x_i, a_i, T_i, r_i) \}_{i = 1}^{n}$ by generating each component sequentially as follows.
For each user $i$, the context vector $x_i \in \mathbb{R}^{10}$ is independently sampled from a standard normal distribution $N(0, I_{10})$.
The action $a_i$ is selected from a discrete action space $\mathcal{A}$ with size $|\mathcal{A}| = 10$.
The action $a_i$ is generated by the logging policy $\pi_0(a \mid x)$, which is defined as a softmax function over a linear combination of features as follows.
\begin{equation*}
\pi_0 (a \mid x) = \frac{\exp(\beta \cdot x^\top \theta_{\pi, a})}{\sum_{a^{\prime} \in \mathcal{A}} \exp(\beta \cdot x^\top \theta_{\pi, a^{\prime}})},
\end{equation*}
where $\theta_\pi \in \mathbb{R}^{d \times |\mathcal{A|}}$ is a weight matrix sampled from the uniform distribution $U(-1, 1)$, and $\beta$ is a parameter that controls the optimality and entropy of the logging policy~(set to $\beta = 1.0$ in this study).

To determine the observed time $T_i$ and the event indicator $r_i$, we explicitly simulate the true~(latent) survival time $L$ and the censoring time $C$.
The survival time $L$ follows a log-normal distribution $\log N(\mu_L(x, a), \sigma_L^2)$ with the scale parameter set to $\sigma_L = 1.0$.
The location parameter $\mu_L(x, a)$ is constructed using $\Tilde{\mu}_L(x, a)$. 
To simulate complex treatment effect heterogeneity where the impact of feature interactions varies dynamically across actions, we define $\Tilde{\mu}_L(x, a)$ as the sum of standardized linear and non-linear terms across all data as follows.
\begin{align*}
\begin{split}
\tilde{\mu}_L(x, a) &= 0.1 \cdot \theta_L^\top \phi(x, a) \\
&\quad + 5.0 \cdot (-1)^{a \pmod 2} \cdot (x_j x_k + x_m^2), \\
\mu_L(x, a) &= 1.0 \cdot \text{Standardize}(\tilde{\mu}_L(x, a)) + 0.5,
\end{split}
\end{align*}
where indices $j, k, m$ are cyclically determined based on the action $a$ as $j = (a \pmod d)$, $k = ((a + 1) \pmod d)$, and $m = ((a + 2) \pmod d)$.
The feature vector $\phi(x, a)$ using the one-hot vector $e_a$ corresponding to the action and the Kronecker product $\otimes$ is defined as follows.
\begin{equation*}
\phi(x, a) = [x^\top, e_a^\top, (x \otimes e_a)^\top]^\top \in \mathbb{R}^{d + |\mathcal{A}| + d \times |\mathcal{A}|}.
\end{equation*}

The censoring time $C$ follows an exponential distribution.
Its scale parameter $\lambda_0(x, a)$ is designed to correlate with the survival time location parameter $\mu_L(x, a)$ to introduce dependent censoring as follows.
\begin{equation*}
\log(\lambda_0(x, a)) = \theta_C^\top \phi(x, a) + \rho_0 \cdot \mu_L(x, a) + \delta_C,
\end{equation*}
where $\rho_0$ controls the correlation between survival and censoring times, and the constant $\delta_C$ is determined numerically via binary search to match the target censoring rate $\rho_1 = P(L > C)$.
We set $\rho_0$ to $-0.4$.
Finally, based on these latent variables, the observed data are determined as $T_i = \min\{ L_i, C_i \}$ and $r_i = \mathbb{I} \{ L_i \le C_i \}$.

For policy value, we adopt the RMST at time horizon $\tau = 2.0$ as follows.
\begin{equation*}
V^{\tau}(x, a) = \int_{0}^{\tau} S(t \mid x, a) dt.
\end{equation*}
The ground truth policy value $V^{\tau}(\pi)$ is approximated using a Monte Carlo simulation with $N_{test} = 100,000$ independent test samples as follows.
\begin{equation*}
V^\tau(\pi) \approx \frac{1}{N_{test}} \sum_{i=1}^{N_{test}} \sum_{a \in \mathcal{A}} \pi(a | x_i) V^\tau(x_i, a).
\end{equation*}

\subsection{Implementation Details of OPE Estimators}
\label{app:simulation_ope_details}

The proposed IPCW-IPS and IPCW-DR, as well as the baseline estimators, require the estimation of nuisance components: the propensity score $\hat{\pi}_0(a \mid x)$, the censoring model $\hat{G}(t \mid x, a)$ and the outcome model $\hat{S}(t \mid x, a)$.
In our experiments, the propensity score $\hat{\pi}_0(a \mid x)$ is estimated using multinomial logistic regression~\citep{hosmer2013applied}.
The censoring model $\hat{G}(t \mid x, a)$ and the outcome model $\hat{S}(t \mid x, a)$ are estimated using Cox proportional hazards models~\citep{cox1972regression} stratified by action.
To ensure estimation stability, particularly to prevent overfitting in samples with limited events, a small $L_2$ regularization~(with a coefficient of $10^{-4}$) is applied to the Cox proportional hazards models.

\subsection{Experimental Settings for OPL}
\label{app:simulation_opl_details}

In the OPL experiments, we parameterize the policy $\pi_\theta(a \mid x)$ using a Multi-Layer Perceptron (MLP)~\citep{lecun2015deep}.
The network architecture consists of a 10-dimensional input layer for the context vector, followed by two hidden layers with 64 units each with ReLU activation functions~\citep{nair2010rectified}.
The output layer corresponds to the action space size (10 dimensions) and applies a softmax activation to ensure a valid probability distribution.
Network weights are initialized using Xavier Uniform initialization~\citep{xavier2010understanding}, while bias terms are initialized to zero.

We implement two distinct learning paradigms.
\paragraph{Regression-based Approach}
This approach acts as a model-based baseline using the estimated outcome model $\hat{S}(t \mid x, a)$.
The learner calculates the predicted RMST for every action through numerical integration using the trapezoidal rule over $M = 100$ equally spaced grid points within the interval $[0, \tau]$ as follows.
\begin{equation*}
\hat{V}^\tau(x, a) \approx \sum_{j=1}^{M} \hat{S}(t_j|x, a) \Delta t,
\end{equation*}
where $\Delta t = \tau / M$. The learner then selects the action that maximizes this predicted value in a deterministic manner.

\paragraph{Policy Gradient-based Approaches}
These learners treat the OPE estimators as differentiable objective functions. They optimize the policy parameters $\theta$ to maximize the following objective:
\begin{equation*}
J(\theta) = \hat{V}_{OPE}(\pi_\theta; \mathcal{D}) \approx \frac{1}{n} \sum_{i=1}^{n} \hat{v}_i(\pi_\theta),
\end{equation*}
where $\hat{v}_i(\pi_\theta)$ represents the individual sample contribution to the estimator (e.g., the IPCW-weighted reward).
We use the Adam optimizer~\citep{kingma2017adam} with a learning rate of $0.01$ and a batch size of 256. The training runs for up to 200 epochs.
To prevent overfitting, we employ an Early Stopping protocol~\citep{goodfellow2016deep} based on a validation set comprising 30\% of the training data, with a patience of 10 epochs.

\paragraph{Evaluation Protocol}
The performance of each learned policy $\hat{\pi}$ is quantitatively evaluated on an independent large-scale test set ($\mathcal{D}_{test}$) containing $N_{test} = 100,000$ contexts $x_j$ sampled from the standard normal distribution.
The final performance metric is the improvement ratio relative to the logging policy, defined as $Ratio = V^\tau(\hat{\pi}) / V^\tau(\pi_0)$.
The true policy value is approximated via Monte Carlo simulation using the true underlying parameters as follows:
\begin{equation*}
V^\tau(\hat{\pi}) \approx \frac{1}{N_{test}} \sum_{j=1}^{N_{test}} \sum_{a \in \mathcal{A}} \hat{\pi}(a \mid x_j) V^\tau (x_j, a).
\end{equation*}

\subsection{Impact of Logging Policy's Quality}
\label{app:impact_beta}

We investigate the robustness of the OPL methods against the quality of the logging policy $\pi_0$.
We control the behavior of the logging policy by varying the parameter $\beta$ defined in Appendix~\ref{app:simulation_data_generating_process}.
A smaller $\beta$ yields a more uniform~(random) logging policy with higher entropy, whereas a larger $\beta$ results in a more deterministic policy that prefers specific actions.
We vary $\beta \in \{ 0.0, 0.5, 1.0, 1.5, 2.0 \}$ while keeping the other experimental conditions fixed.

Figure~\ref{fig:opl-beta} illustrates the improvement ratio of the learned policies compared to the logging policy.
\begin{figure}[!t]
\centering
\includegraphics[width=\linewidth]{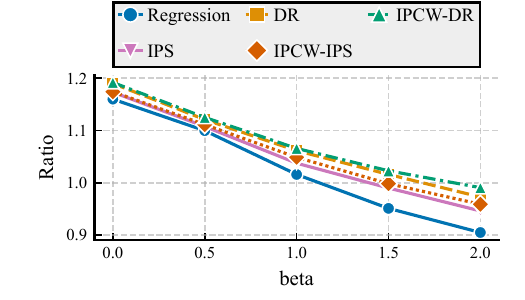} \vspace{-4mm}
\caption{
OPL performance comparison under varying $\beta$.
The improvement ratio $V^\tau(\hat{\pi}) / V^\tau(\pi_0)$ quantifies how effectively each method learns a superior policy compared to the logging policy optimalities.
}
\label{fig:opl-beta}
\end{figure}
The results demonstrate that our proposed estimators, particularly IPCW-DR, outperform the naive baselines and the regression-based approach across all values of $\beta$.

\subsection{Data Construction for Semi-Synthetic Experiments}
\label{app:starbucks_details}

In the real-world application, we use the Starbucks Customer Reward Program dataset to construct a semi-synthetic environment.
This section details the data preprocessing, the construction of the Oracle environment including non-linear reward modifications, and the experimental setup for comparing OPE estimators.

\paragraph{Oracle Environment Construction}
First, the dataset is preprocessed to remove records with invalid values, such as an age of 118.
The processed data is then split into an environment construction set $\mathcal{D}_{env}$ (60\%) and a pooling set $\mathcal{D}_{pool}$ (40\%).
We train a Random Survival Forest (RSF)~\citep{ishwaran2008random} on $\mathcal{D}_{env}$ to predict the baseline survival distribution using the full set of user and offer features.
To ensure the environment possesses complex dynamics that cannot be easily captured by simple linear approximations, we introduce non-linear interactions into the true RMST.
Specifically, for action $a = 0$, the RMST is multiplied by 1.5 if the user's income exceeds \$65,000 and their age is over 55, while it is reduced by half for high-income users aged 55 or younger.
A symmetric XOR-like rule is applied to action $a = 1$, where the RMST increases for lower-income users aged 55 or younger and decreases for lower-income seniors.
For each experimental trial, we sample $N = 5,000$ users from $\mathcal{D}_{pool}$ to create an evaluation set.

\paragraph{Artificial Censoring Mechanism}
As described in the main text, we generate artificial censoring times $C_i$ to simulate dependent censoring. The censoring time follows an exponential distribution with a scale parameter $\lambda_i$:
\begin{equation*}
\lambda_i = \lambda_0 \cdot \exp(-\beta_1 \bar{x}_{\text{income}} - \beta_2 \bar{x}_{\text{age}}).
\end{equation*}
We set $\lambda_0$ to achieve a mean censoring time of 120 hours, with coefficients $\beta_1 = 0.8$ and $\beta_2 = 0.3$.
The observed time and event indicators are updated as $T_i^* = \min\{ T_i, C_i \}$ and $r_i^* = \mathbb{I}\{ T_i \leq C_i \}$.

\paragraph{Experimental Setup and Model Misspecification}
The logging policy $\pi_0$ corresponds to the actual action selection distribution observed in the dataset, which is approximately uniform.
The target policy $\pi_e$ is an $\epsilon$-greedy policy with $\epsilon=0.1$ that prioritizes actions predicted to minimize the RMST by the base RSF model.
To rigorously evaluate the estimators under model misspecification, we intentionally limit the capacity of the nuisance models used by the OPE estimators.
While the Oracle environment is built on a non-linear RSF model using all available attributes, the nuisance models (propensity score, censoring, and outcome models) are restricted to linear models (logistic regression and Cox proportional hazards models). Furthermore, these models exclude offer attributes and rely solely on user demographics to induce confounding bias.

\paragraph{Experimental Settings for Constrained Off-Policy Learning}
In the constrained OPL experiments, we compare the proposed methods against three baselines: (i) Constrained Regression, (ii) Constrained IPS, (iii) Constrained DR.
Constrained Regression is a model-based approach that explicitly predicts the RMST and the expected cost for each action.
It selects the action that minimizes the predicted RMST among those satisfying the budget constraint.
Constrained IPS and Constrained DR are gradient-based primal-dual learning methods that optimize the objective using the IPS and DR estimators, respectively.

For the gradient-based approaches, including the baselines and our proposed methods, the policy $\pi_\theta(a \mid x)$ is modeled  using an MLP with two hidden layers of 64 units with ReLU activations.
The network weights are optimized using the Adam optimizer with a learning rate of $1.0 \time 10^{-3}$ and a batch size of 512.
To solve the constrained optimization problem via Lagrangian duality, we introduce a Lagrange multiplier $\lambda$.
To ensure stable convergence, $\lambda$ is updated with a larger learning rate of 0.05 and is initialized to 3.0 to encourage safe exploration within the feasible region from the start.
Furthermore, to robustly satisfy the budget constraint despite the stochastic nature of gradient updates, we incorporate a safety margin into the training process, targeting a slightly stricter budget $B - 0.05$.
The models are trained for 500 epochs.
The final policy performance is evaluated on an independent test set of 5,000 users, with results averaged over 100 independent trials.

\section{Proofs of Theoretical Results} \label{app:proof}

\subsection{Proof of Proposition~\ref{prop:bias_naive_estimators}}
\label{app:bias_naive_estimators}
\begin{proof}
Let $w(x, a) = \pi_e(a \mid x) / \pi_0(a \mid x)$.
The expectation of the naive IPS estimator is calculated as follows.
\begin{align*}
&\mathbb{E}_{p(\mathcal{D})} \left[ \hat{V}_{\text{IPS} } (\pi_e, t; \mathcal{D}) \right] \\
&=\mathbb{E}_{p(x) \pi_0(a \mid x) p(L, C \mid x, a)} [w(x, a) \mathbb{I} \{ T > t\} ] \\
&= \mathbb{E}_{p(x) \pi_0(a \mid x)} \left[ w(x, a) \mathbb{E}_{p(L, C \mid x, a)} [ \mathbb{I} \{ \min\{L, C \} > t \} ] \right] \\
&= \mathbb{E}_{p(x) \pi_0(a \mid x)} \biggl[ w(x, a) \mathbb{E}_{p(L \mid x, a)} [\mathbb{I} \{ L > t \}] \\
&\quad\quad\quad\quad\quad \times \mathbb{E}_{p(C \mid x, a)} [\mathbb{I} \{ C > t \}] \biggr] \\
&= \mathbb{E}_{p(x) \pi_0(a \mid x)} \left[ w(x, a) P(L > t \mid x, a) P(C > t \mid x, a) \right] \\
&= \mathbb{E}_{p(x) \pi_0(a \mid x)} \left[ w(x, a) S(x, a, t) G(t \mid x, a) \right] \\
&= \mathbb{E}_{p(x)} \left[ \sum_{a \in \mathcal{A}} \pi_0(a \mid x) w(x, a) S(x, a, t) G(t \mid x, a) \right] \\
&= \mathbb{E}_{p(x)} \left[ \sum_{a \in \mathcal{A}} \pi_0(a \mid x) \frac{\pi_e(a \mid x)}{\pi_0(a \mid x)} S(x, a, t) G(t \mid x, a) \right] \\
&= \mathbb{E}_{p(x)} \left[ \sum_{a \in \mathcal{A}} \pi_e(a \mid x) S(x, a, t) G(t \mid x, a) \right] \\
&= \mathbb{E}_{p(x) \pi_e(a \mid x)} [ S(x, a, t) G(t \mid x, a) ].
\end{align*}

First, we define the naive doubly robust estimator as
\begin{align}
    & \hat{V}_{\mathrm{DR}}(\pi_e,t;\mathcal{D}) = \frac{1}{n}\sum_{i=1}^n \Bigg\{w(x_i, a_i) (\mathbb{I}\{T_i > t\}-\hat{S}(x_i, a_i, t)) \notag \\
    & \hspace{5cm} + \sum_{a \in \mathcal{A}} \pi_e(a \mid x_i)\,\hat{S}(x_i, a, t) \Bigg\} \label{eq:DR} 
\end{align}

The expectation of the naive DR estimator $\hat{V}_{\text{DR}}$ is calculated as follows.
\begin{align*}
&\mathbb{E}_{p(\mathcal{D})} \left[ \hat{V}_{\text{DR}} (\pi_e, t; \mathcal{D}_H) \right] \\
&=\mathbb{E}_{p(x) \pi_0(a \mid x)} [ w(x, a) \mathbb{E}_{p(L, C \mid x, a)} [ \mathbb{I} \{T_i > t\} ] ] \\ 
&\quad- \mathbb{E}_{p(x) \pi_0(a \mid x) p(L, C \mid x, a)} \left[ w(x, a) \hat{S} (x_i, a_i, t) \right] \\
&\quad+ \mathbb{E}_{p(x) \pi_0(a \mid x) p(L, C \mid x, a)} \left[ \sum_{a \in \mathcal{A}} \pi_e(a \mid x_i) \hat{S}(x_i, a_i, t) \right] \\
&= \mathbb{E}_{p(x) \pi_e(a \mid x)} [ S(x, a, t) G(t \mid x, a) ] \\
&\quad- \mathbb{E}_{p(x) \pi_e(a \mid x)} [ \hat{S} (x_i, a_i, t) ] \\
&\quad+ \mathbb{E}_{p(x)} \left[ \sum_{a \in \mathcal{A}} \pi_e(a \mid x_i) \hat{S}(x_i, a_i, t) \right] \\
&= \mathbb{E}_{p(x) \pi_e(a \mid x)} [ S(x, a, t) G(t \mid x, a) ] \\
&\quad- \mathbb{E}_{p(x) \pi_e(a \mid x)} [ \hat{S} (x_i, a_i, t) ] \\
&\quad+ \mathbb{E}_{p(x) \pi_e(a \mid x)} [ \hat{S}(x_i, a_i, t) ] \\
&= \mathbb{E}_{p(x) \pi_e(a \mid x)} [ S(x, a, t) G(t \mid x, a) ].
\end{align*}
Therefore, the bias of the naive estimators can be summarized as follows.
\begin{equation*}
  \text{Bias} [\hat{V}_{\text{IPS}}] = \text{Bias} [\hat{V}_{\text{DR}}] = \mathbb{E}_{p(x) \pi_e(a \mid x)} [ S(x, a, t) ( G(t \mid x, a) - 1 ) ].
\end{equation*}

\end{proof}

\subsection{Proof of Theorem~\ref{thm:unbiasedness}}
\label{app:unbiasedness}
\begin{proof}
We calculate the expectation of IPCW-IPS estimator.
\begin{align*}
&\mathbb{E}_{p(\mathcal{D})} [\hat{V}_{\text{IPCW-IPS}} (\pi_e, t)] \\
&= \mathbb{E}_{p(x) \pi_0(a \mid x) } \left[ \frac{w(x, a)}{G(t \mid x, a)} \mathbb{E}_{p(L, C \mid x, a)} [\mathbb{I} \{T > t \} ] \right] \\
&= \mathbb{E}_{p(x) \pi_0(a \mid x) } \left[ \frac{w(x, a)}{G(t \mid x, a)} S(x, a, t) G(t \mid x, a) \right] \\ 
&\quad (\because \text{Proof of Proposition~\ref{prop:bias_naive_estimators}}) \\
&= \mathbb{E}_{p(x) \pi_0(a \mid x) } [ w(x, a) S(x, a, t)  ] \\ 
&= \mathbb{E}_{p(x) \pi_e(a \mid x) } [ S(x, a, t) ] \\ 
&= V(\pi_e, t).
\end{align*}
Therefore, the IPCW-IPS estimator is unbiased for the expected survival probability $V(\pi_e, t)$ if the propensity score $\pi_0(a \mid x)$ is correctly specified.

Next, we calculate the expectation of IPCW-DR estimator as follows.
\begin{align*}
&\mathbb{E}_{p(\mathcal{D})} [\hat{V}_{\text{IPCW-DR}} (\pi_e, t)] \\
&=\mathbb{E}_{p(x) \pi_0(a \mid x)} \Biggl[ w(x, a) \biggl( \mathbb{E}_{p(L, C \mid x, a)} \left[ \frac{\mathbb{I} \{ T > t \} }{G(t \mid x, a)} \right] \\
&\quad\quad - \hat{S}(x, a, t) \biggr) + \sum_{a \in \mathcal{A}} \pi_e(a \mid x) \hat{S}(x, a, t) \Biggr] \\
&\quad+ \sum_{a \in \mathcal{A}} \pi_e(a \mid x) \hat{S}(x, a, t) \Biggr] \\
&=\mathbb{E}_{p(x) \pi_0(a \mid x)} \Biggl[ w(x, a) ( S(x, a, t) - \hat{S}(x, a, t) ) \Biggr] \\
&\quad+ \mathbb{E}_{p(x)} \left[ \sum_{a \in \mathcal{A}} \pi_e(a \mid x) \hat{S}(x, a, t) \right].
\end{align*}
\paragraph{(i) Case when the propensity score $\pi_0(a \mid x)$ is correctly specified:}
\begin{align*}
&\mathbb{E}_{p(x) \pi_0(a \mid x)} \Biggl[ w(x, a) ( S(x, a, t) - \hat{S}(x, a, t) ) \Biggr] \\
&\quad+ \mathbb{E}_{p(x)} \left[ \sum_{a \in \mathcal{A}} \pi_e(a \mid x) \hat{S}(x, a, t) \right] \\
&=\mathbb{E}_{p(x)} \left[ \sum_{a \in \mathcal{A}} \pi_0(a \mid x) \frac{\pi_e(a \mid x)}{\pi_0(a \mid x)} (S(x, a, t) - \hat{S}(x, a, t)) \right] \\
&\quad+ \mathbb{E}_{p(x)} \left[ \sum_{a \in \mathcal{A}} \pi_e(a \mid x) \hat{S}(x, a, t) \right] \\
&=\mathbb{E}_{p(x) \pi_e(a \mid x)} [S(x, a, t)] - \mathbb{E}_{p(x) \pi_e(a \mid x)} [\hat{S}(x, a, t)] \\
&\quad+ \mathbb{E}_{p(x) \pi_e(a \mid x)} [\hat{S}(x, a, t)] \\
&= V(\pi_e, t).
\end{align*}

\paragraph{(ii) Case when the outcome model $\hat{S}(x, a, t)$ is correctly specified:}
\begin{align*}
&\mathbb{E}_{p(x) \pi_0(a \mid x)} \Biggl[ w(x, a) ( S(x, a, t) - \hat{S}(x, a, t) ) \Biggr] \\
&\quad+ \mathbb{E}_{p(x)} \left[ \sum_{a \in \mathcal{A}} \pi_e(a \mid x) \hat{S}(x, a, t) \right] \\
&=\mathbb{E}_{p(x) \pi_0(a \mid x)} \Biggl[ w(x, a) ( S(x, a, t) - S(x, a, t) ) \Biggr] \\
&\quad+ \mathbb{E}_{p(x)} \left[ \sum_{a \in \mathcal{A}} \pi_e(a \mid x) S(x, a, t) \right] \\
&= \mathbb{E}_{p(x) \pi_e(a \mid x)} [S(x, a, t)] \\
&= V(\pi_e, t).
\end{align*}
Therefore, in either case, the IPCW-DR estimator is unbiased for the expected survival probability $V(\pi_e, t)$.

\end{proof}

\subsection{Proof of Theorem~\ref{thm:variance}}
\label{app:variance}

We provide the detailed derivation of the variance of IPCW-IPS and IPCW-DR estimators and formally prove the variance reduction property of IPCW-DR. 

\begin{proof}
We first derive the variance of the IPCW-IPS estimator.
The variance of the IPCW-IPS estimator is given by
\begin{align*}
&n \mathbb{V}_{p(\mathcal{D})} [\hat{V}_{\text{IPCW-IPS}} (\pi_e, t; \mathcal{D})] \\
&=\mathbb{V}_{p(x) \pi_0(a \mid x) p(L, C \mid x, a)} \left[ w(x, a) \frac{\mathbb{I} \{ T > t \}}{G(t \mid x, a)} \right] \\
&= \mathbb{E}_{p(x) \pi_0(a \mid x)} \left[ \mathbb{V}_{p(L, C \mid x, a)} \left[ w(x, a) \frac{\mathbb{I} \{ T > t \} }{G(t \mid x, a)} \right] \right] \\
&\quad+ \mathbb{V}_{p(x) \pi_0(a \mid x)} \left[ \mathbb{E}_{p(L, C \mid x, a)} \left[ w(x, a) \frac{\mathbb{I} \{ T > t \}}{G(t \mid x, a)} \right] \right] \\
&= \mathbb{E}_{p(x) \pi_0(a \mid x)} \left[ (w(x, a))^2 \mathbb{V}_{p(L, C \mid x, a)} \left[  \frac{\mathbb{I} \{ T > t \} }{G(t \mid x, a)} \right] \right] \\
&\quad+ \mathbb{V}_{p(x) \pi_0(a \mid x)} [ w(x, a) S(x, a, t) ] \\
&= \mathbb{E}_{p(x) \pi_0(a \mid x)} \left[ (w(x, a))^2 \mathbb{V}_{p(L, C \mid x, a)} \left[  \frac{\mathbb{I} \{ T > t \} }{G(t \mid x, a)} \right] \right] \\
&\quad+ \mathbb{E}_{p(x)} [\mathbb{V}_{\pi_0(a \mid x)} [w(x, a) S(x, a, t)]] \\
&\quad+ \mathbb{V}_{p(x)} [\mathbb{E}_{\pi_0(a \mid x)} [w(x, a) S(x, a, t)]] \\
&= \mathbb{E}_{p(x) \pi_0(a \mid x)} \left[ (w(x, a))^2 \mathbb{V}_{p(L, C \mid x, a)} \left[  \frac{\mathbb{I} \{ T > t \} }{G(t \mid x, a)} \right] \right] \\
&\quad+ \mathbb{E}_{p(x)} [\mathbb{V}_{\pi_0(a \mid x)} [w(x, a) S(x, a, t)]] + \mathbb{V}_{p(x)} [S^{\pi_e}(x, t)] \\
&= \mathbb{E}_{p(x) \pi_0(a \mid x)} [(w(x, a))^2 \sigma^2(x, a, t)] \\
&\quad+ \mathbb{E}_{p(x)} [\mathbb{V}_{\pi_0(a \mid x)} [w(x, a) S(x, a, t)]] + \mathbb{V}_{p(x)} [S^{\pi_e}(x, t)],
\end{align*}
where 
\begin{gather*}
S^{\pi_e}(x, t) \coloneq \mathbb{E}_{\pi_e(a \mid x)} [S(x, a, t)], \\ 
\sigma^2(x, a, t) \coloneq \mathbb{V}_{p(L, C \mid x, a)} \left[ \frac{\mathbb{I} \{T > t\}} {G(t \mid x, a)} \right].
\end{gather*}

Next, we derive the variance of the IPCW-DR estimator.
Assume that the $\hat{S}(x, a, t)$ and $\hat{G}(t \mid x, a)$ are correctly specified.
To simplify the derivation, let $\delta(x, a, t, T) \coloneq \frac{\mathbb{I} \{ T > t \} }{G(t \mid x, a)} - S(x, a, t)$ denote the IPCW-adjusted residual.
Using this notation, the variance of the IPCW-DR estimator is given by
\begin{align}
\begin{split}
\label{app-eq:variance_IPCW-DR}
&n \mathbb{V}_{p(\mathcal{D})} [\hat{V}_{\text{IPCW-DR}} (\pi_e, t; \mathcal{D})] \\
&= \mathbb{V}_{p(x) \pi_0(a \mid x) p(L, C \mid x, a)} \biggl[ w(x, a) \delta(x, a, t, T) \\
&\quad\quad + \sum_{a \in \mathcal{A}} \pi_e(a \mid x) S(x, a, t) \biggr] \\
&= \mathbb{E}_{p(x) \pi_0(a \mid x)} \biggl[\mathbb{V}_{p(L, C \mid x, a)} \biggl[ w(x, a) \delta(x, a, t, T) \\
&\quad\quad + \sum_{a \in \mathcal{A}} \pi_e(a \mid x) S(x, a, t) \biggr] \biggr] \\
&\quad+ \mathbb{V}_{p(x) \pi_0(a \mid x)} \biggl[\mathbb{E}_{p(L, C \mid x, a)} \biggl[ w(x, a) \delta(x, a, t, T) \\
&\quad\quad + \sum_{a \in \mathcal{A}} \pi_e(a \mid x) S(x, a, t) \biggr] \biggr].
\end{split}
\end{align}

We first focus on the inner expectation $\mathbb{E}_{p(L, C \mid x, a)} [ \cdot ]$:
\begin{align*}
&\mathbb{E}_{p(L, C \mid x, a)} \biggl[ w(x, a) \delta(x, a, t, T) \\
&\quad\quad + \sum_{a \in \mathcal{A}} \pi_e(a \mid x) S(x, a, t) \biggr] \\
&= w(x, a) \mathbb{E}_{p(L, C \mid x, a)} [ \delta(x, a, t, T) ] \\
&\quad\quad + \sum_{a \in \mathcal{A}} \pi_e(a \mid x) S(x, a, t) \\
&= w(x, a) \underbrace{ \left( S(x, a, t) - S(x, a, t) \right) }_{=0} \\
&\quad\quad + \sum_{a \in \mathcal{A}} \pi_e(a \mid x) S(x, a, t) \\
&= \sum_{a \in \mathcal{A}} \pi_e(a \mid x) S(x, a, t).
\end{align*}

Next, we focus on the inner variance $\mathbb{V}_{p(L, C \mid x, a)} [ \cdot ]$:
\begin{align*}
&\mathbb{V}_{p(L, C \mid x, a)} \Biggl[ w(x, a) \left( \frac{\mathbb{I} \{ T > t \} }{G(t \mid x, a)} - S(x, a, t) \right) \\
&\quad+ \sum_{a \in \mathcal{A}} \pi_e(a \mid x) S(x, a, t) \Biggr] \\
&= (w(x, a))^2 \mathbb{V}_{p(L, C \mid x, a)} \left[ \frac{\mathbb{I} \{ T > t \} }{G(t \mid x, a)}  \right] \\
&= (w(x, a))^2 \sigma^2 (x, a, t).
\end{align*}

Substituting these results into Equation~\eqref{app-eq:variance_IPCW-DR}, we obtain:
\begin{align*}
&\mathbb{E}_{p(x) \pi_0(a \mid x)} [(w(x, a))^2 \sigma^2 (x, a, t)] \\
&\quad+ \mathbb{V}_{p(x) \pi_0(a \mid x)} \left[ \mathbb{E}_{\pi_e(a \mid x)} [S(x, a, t)] \right] \\
&=\mathbb{E}_{p(x) \pi_0(a \mid x)} [(w(x, a))^2 \sigma^2 (x, a, t)] \\
&\quad+ \mathbb{V}_{p(x)} [ \mathbb{E}_{\pi_0(a \mid x)} [ \mathbb{E}_{\pi_e(a \mid x)} [S(x, a, t)]] ] \\
&\quad+ \mathbb{E}_{p(x)} [\mathbb{V}_{\pi_0(a \mid x)} [\mathbb{E}_{\pi_e(a \mid x)} [S(x, a, t)]]] \\
&=\mathbb{E}_{p(x) \pi_0(a \mid x)} [(w(x, a))^2 \sigma^2 (x, a, t)] + \mathbb{V}_{p(x)} [ S^{\pi_e} (x, t) ].
\end{align*}

Finally, we compare the variances of the two estimators to prove the variance reduction property.
Subtracting the variance of IPCW-DR from that of IPCW-IPS is given by
\begin{align*}
&n \mathbb{V}_{p(\mathcal{D})} [\hat{V}_{\text{IPCW-IPS}}] - n \mathbb{V}_{p(\mathcal{D})} [\hat{V}_{\text{IPCW-DR}}] \\
&= \mathbb{E}_{p(x) \pi_0(a \mid x)} [(w(x, a))^2 \sigma^2(x, a, t)] + \mathbb{V}_{p(x)} [S^{\pi_e}(x, t)] \\
&\quad+ \mathbb{E}_{p(x)} [\mathbb{V}_{\pi_0(a \mid x)} [w(x, a) S(x, a, t)]] \\
&\quad- \Bigl( \mathbb{E}_{p(x) \pi_0(a \mid x)} [(w(x, a))^2 \sigma^2 (x, a, t)] + \mathbb{V}_{p(x)} [ S^{\pi_e} (x, t) ] \Bigr) \\
&= \mathbb{E}_{p(x)} [ \mathbb{V}_{\pi_0(a \mid x)} [w(x, a) S(x, a, t) ] ] \\
& \geq 0. 
\end{align*}
This confirms that IPCW-DR achieves a variance reduction equal to the expected conditional variance of the weighted outcome model, thereby ensuring $\mathbb{V}[\hat{V}_{\text{IPCW-DR}}] \le \mathbb{V}[\hat{V}_{\text{IPCW-IPS}}]$.

\end{proof}

\subsection{Theoretical Analysis of IPCW-IPS Variance Properties}
\label{app:variance_properties}
We provide a detailed analysis of the variance of the IPCW-IPS estimator.
First, we analyze the variance component $\sigma^2(x, a, t)$ defined as follows.
\begin{align*}
\sigma^2(x, a, t) &\coloneq \mathbb{V}_{p(L, C \mid x, a)} \left[ \frac{\mathbb{I} \{T > t\}}{G(t \mid x, a)} \right] \\
&= \mathbb{V}_{p(L, C \mid x, a)} \left[ \frac{\mathbb{I} \{ \min\{L, C\} > t \}}{G(t \mid x, a)} \right] \\
&= \mathbb{E}_{p(L, C \mid x, a)} \left[ \frac{(\mathbb{I} \{ \min\{L, C\} > t \})^2}{(G(t \mid x, a))^2} \right] \\
&\quad- \left( \mathbb{E}_{p(L, C \mid x, a)} \left[ \frac{\mathbb{I} \{ \min\{L, C\} > t \}}{G(t \mid x, a)} \right] \right)^2  \\
&= \frac{S(x, a, t)}{G(t \mid x, a)} - ( S(x, a, t) )^2.
\end{align*}

This decomposition reveals a critical property of the variance in IPCW-IPS estimator.
The variance is inversely proportional to the censoring survival function $G(t \mid x, a)$.
If censoring occurs frequently such that $G(t \mid x, a)$ is small, the variance $\sigma^2(x, a, t)$ becomes large.

\subsection{Proof of Unbiasedness of Equation~(\ref{eq:ipcw-ips-rmst}) and (\ref{eq:ipcw-dr-rmst})}
\begin{proof}
First, we prove the unbiasedness of the IPCW-IPS estimator for RMST.
We calculate its expectation as follows.
\begin{align*}
&\mathbb{E}_{p(\mathcal{D})} \left[ \hat{V}_{\text{IPCW-IPS}}^{\tau} (\pi_e; \mathcal{D}) \right] \\
&= \mathbb{E}_{p(\mathcal{D})} \left[ \int_{0}^{\tau} \hat{V}_{\text{IPCW-IPS}} (\pi_e, t; \mathcal{D}) dt \right] \\
&= \int_{0}^{\tau} \mathbb{E}_{p(\mathcal{D})} \left[ \hat{V}_{\text{IPCW-IPS}} (\pi_e, t; \mathcal{D}) \right] dt \\
&= \int_{0}^{\tau} V(\pi_e, t) dt \\
&\quad (\because \text{Theorem~\ref{thm:unbiasedness}}) \\
&= V^{\tau}(\pi_e).
\end{align*}

Next, we prove the unbiasedness of the IPCW-DR estimator for RMST.
We calculate its expectation as follows.
\begin{align*}
&\mathbb{E}_{p(\mathcal{D})} \left[ \hat{V}_{\text{IPCW-DR}}^{\tau} (\pi_e; \mathcal{D}) \right] \\
&= \mathbb{E}_{p(\mathcal{D})} \left[ \int_{0}^{\tau} \hat{V}_{\text{IPCW-DR}} (\pi_e, t; \mathcal{D}) dt \right] \\
&= \int_{0}^{\tau} \mathbb{E}_{p(\mathcal{D})} \left[ \hat{V}_{\text{IPCW-DR}} (\pi_e, t; \mathcal{D}) \right] dt \\
&= \int_{0}^{\tau} V(\pi_e, t) dt \\
&\quad (\because \text{Theorem~\ref{thm:unbiasedness}}) \\
&= V^{\tau}(\pi_e).
\end{align*}
\end{proof}

\subsection{Derivation of Equation~\eqref{eq:policy_gradient}}

\begin{align*}
&\nabla_{\theta}V(\pi_{\theta}, t) \\
&=\nabla_{\theta} \mathbb{E}_{p(x) \pi_{\theta}(a \mid x)} [S(x, a, t)] \\
&=\mathbb{E}_{p(x)} \left[ \sum_{a \in \mathcal{A}} S(x, a, t) \nabla_{\theta} \pi_{\theta} (a \mid x) \right] \\
&=\mathbb{E}_{p(x)} \left[ \sum_{a \in \mathcal{A}} S(x, a, t) \pi_{\theta}(a \mid x) \nabla_{\theta} \log \pi_{\theta} (a \mid x) \right] \\
&=\mathbb{E}_{p(x) \pi_{\theta}(a \mid x)} \left[ S(x, a, t) \nabla_{\theta} \log \pi_{\theta} (a \mid x) \right]
\end{align*}

\subsection{Proof of Unbiasedness of IPCW-IPS and IPCW-DR against the True Lagrangian Policy Gradient}
\label{app:estimators-against-Lagrangian}

\begin{proof}
We calculate the expectation of the IPCW-IPS estimator for the Lagrangian policy gradient as follows:
\begin{align*}
&\mathbb{E}_{p(\mathcal{D})} \left[ \widehat{\nabla_{\theta} L}_{\text{IPCW-IPS}} \right] \\
&= \mathbb{E}_{p(x) \pi_0(a \mid x) p(L, C \mid x, a)} \Biggl[ w^{\prime} (x, a) \left( \frac{\mathbb{I} \{ T > t \}}{G(t \mid x, a)} - \lambda c(x, a) \right) \\
&\quad\quad \times \nabla_{\theta} \log \pi_{\theta}(a \mid x) \Biggr] \\
&= \mathbb{E}_{p(x) \pi_0(a \mid x) } \Biggl[ w^{\prime} (x, a) \nabla_{\theta} \log \pi_{\theta}(a \mid x) \\
&\quad\quad \times ( S(x, a, t) - \lambda c(x, a) ) \Biggr] \\
&= \mathbb{E}_{p(x) \pi_0(a \mid x) } \Biggl[ w^{\prime} (x, a) \nabla_{\theta} \log \pi_{\theta}(a \mid x) R_{\lambda}(x, a, t) \Biggr] \\
&= \mathbb{E}_{p(x) } \Biggl[ \sum_{a \in \mathcal{A} }\pi_0(a \mid x) \frac{\pi_\theta(a \mid x)}{\pi_0(a \mid x)} \nabla_{\theta} \log \pi_{\theta}(a \mid x) R_{\lambda}(x, a, t) \Biggr] \\
&= \mathbb{E}_{p(x) \pi_\theta (a \mid x) } [ R_{\lambda}(x, a, t) \nabla_{\theta} \log \pi_{\theta}(a \mid x) ] \\
&= \nabla_{\theta} L(\pi_{\theta}, t, \lambda).
\end{align*}
Therefore, the IPCW-IPS estimator for the Lagrangian policy gradient is an unbiased estimator of the true Lagrangian policy gradient $\nabla_{\theta} L(\pi_{\theta}, t, \lambda)$.
\end{proof}

\begin{proof}
We calculate the expectation of the IPCW-DR estimator for the Lagrangian policy gradient as follows:
\begin{align*}
&\mathbb{E}_{p(\mathcal{D})} \left[ \widehat{\nabla_{\theta} L}_{\text{IPCW-DR}} \right] \\
&=\mathbb{E}_{p(x) \pi_0(a \mid x) p(L, C \mid x, a)} \Biggl[ w^{\prime}(x, a) \nabla_{\theta} \log \pi_{\theta}(a \mid x) \\
&\quad\quad \times \Biggl( \frac{\mathbb{I} \{ T > t \}}{G(t \mid x, a)} - \hat{S}(x, a, t) \Biggr) \\
&\quad\quad + \mathbb{E}_{\pi_{\theta}(a \mid x)} [ ( \hat{S}(x, a, t) - \lambda c(x, a) ) \nabla_{\theta} \log \pi_{\theta}(a \mid x) ] \Biggr] \\
&=\mathbb{E}_{p(x) \pi_0(a \mid x)} \Biggl[ w^{\prime}(x, a) \nabla_{\theta} \log \pi_{\theta}(a \mid x) \\
&\quad\quad \times ( S(x, a, t) - \hat{S}(x, a, t) ) \\
&\quad\quad + \mathbb{E}_{\pi_{\theta}(a \mid x)} [ (\hat{S}(x, a, t) - \lambda c(x, a)) \nabla_{\theta} \log \pi_{\theta}(a \mid x) ] \Biggr].
\end{align*}

\textbf{(i) Case when the propensity score $\pi_0(a \mid x)$ is correctly specified:}
\begin{align*}
&\mathbb{E}_{p(x) \pi_0(a \mid x)} [ w^{\prime}(x, a) \nabla_{\theta} \log \pi_{\theta}(a \mid x) ( S(x, a, t) - \hat{S}(x, a, t) ) \\
&\quad+ \mathbb{E}_{\pi_{\theta}(a \mid x)} [ (\hat{S}(x, a, t) - \lambda c(x, a)) \nabla_{\theta} \log \pi_{\theta}(a \mid x) ] ] \\
&=\mathbb{E}_{p(x)} [ \mathbb{E}_{\pi_\theta(a \mid x)} [ \nabla_{\theta} \log \pi_{\theta}(a \mid x) ( S(x, a, t) - \hat{S}(x, a, t) ) ] \\
&\quad+ \mathbb{E}_{\pi_{\theta}(a \mid x)} [ (\hat{S}(x, a, t) - \lambda c(x, a)) \nabla_{\theta} \log \pi_{\theta}(a \mid x) ] ] \\
&=\mathbb{E}_{p(x)} [ \mathbb{E}_{\pi_\theta(a \mid x)} [ \nabla_{\theta} \log \pi_{\theta}(a \mid x) ( S(x, a, t) - \hat{S}(x, a, t) )  \\
&\quad+ (\hat{S}(x, a, t) - \lambda c(x, a)) \nabla_{\theta} \log \pi_{\theta}(a \mid x) ] ] \\
&=\mathbb{E}_{p(x) \pi_\theta(a \mid x)} [ \nabla_{\theta} \log \pi_{\theta}(a \mid x) R_\lambda(x, a, t) ] \\
&= \nabla_{\theta} L(\pi_{\theta}, t, \lambda).
\end{align*}

\textbf{(ii) Case when the outcome model $\hat{S}(x, a, t)$ is correctly specified:}
\begin{align*}
&\mathbb{E}_{p(x) \pi_0(a \mid x)} [ w^{\prime}(x, a) \nabla_{\theta} \log \pi_{\theta}(a \mid x) ( S(x, a, t) - S(x, a, t) ) \\
&\quad+ \mathbb{E}_{\pi_{\theta}(a \mid x)} [ (S(x, a, t) - \lambda c(x, a)) \nabla_{\theta} \log \pi_{\theta}(a \mid x) ] ] \\
&=\mathbb{E}_{\pi_{\theta}(a \mid x)} [ R_{\lambda}(x, a, t) \nabla_{\theta} \log \pi_{\theta}(a \mid x) ] ] \\
&= \nabla_{\theta} L(\pi_{\theta}, t, \lambda).
\end{align*}
Therefore, in either case, the IPCW-DR estimator for the Lagrangian policy gradient is an unbiased estimator of the true Lagrangian policy gradient $\nabla_{\theta} L(\pi_{\theta}, t, \lambda)$.

\end{proof}